\newcommand{\bs}[1]{\boldsymbol{#1}}
\newcommand{\M}{\mathcal{M}}
\newcommand{\Tr}{\mathrm{Tr}}
\newcommand{\reals}{\mathbb{R}}
\newcommand{\cH}{\mathcal{H}}
\newcommand{\cL}{\mathcal{L}}
\newcommand{\Id}{\mathbb{I}}
\newcommand{\ket}[1]{\ensuremath{\left|#1\right\rangle}}
\newcommand{\expect}[1]{\ensuremath{\left\langle#1\right\rangle}}
\def\Id{1\!\mathrm{l}}
\newcommand{\Fi}{\mathcal{I}}
\newcommand{\bvec}[1]{\boldsymbol{#1}}
\newcommand{\rhohat}{\hat{\rho}}
\newcommand{\rhoML}[1]{\rhohat_{\scriptscriptstyle{\mathrm{ML},#1}}}
\newtheorem{lem}{Lemma}
\newtheorem{mydef}{Definition}
\begin{document}
\author{Travis L Scholten}
\author{Robin Blume-Kohout}
\affiliation{Center for Computing Research (CCR), Sandia National Laboratories}
\affiliation{Center for Quantum Information and Control (CQuIC), University of New Mexico}

\title{Behavior of the Maximum Likelihood in Quantum State Tomography}

\begin{abstract}
Quantum state tomography on a $d$-dimensional system demands resources that grow rapidly with $d$. They may be reduced by using model selection to tailor the number of parameters in the model (i.e., the size of the density matrix).  Most model selection methods typically rely on a test statistic and a null theory that describes its behavior when two models are equally good. Here, we consider the loglikelihood ratio.  Because of the positivity constraint $\rho \geq 0$, quantum state space does not generally satisfy local asymptotic normality, meaning the classical null theory for the loglikelihood ratio (the Wilks theorem) should not be used.  Thus, understanding and quantifying how positivity affects the null behavior of this test statistic is necessary for its use in model selection for state tomography.  We define a new generalization of local asymptotic normality, metric-projected local asymptotic normality, show that quantum state space satisfies it, and derive a replacement for the Wilks theorem. In addition to enabling reliable model selection, our results shed more light on the qualitative effects of the positivity constraint on state tomography.
\end{abstract}
\date{\today}

\maketitle

Determining the quantum state $\rho_{0}$ produced by a specific preparation procedure for a quantum system is a problem almost as old as quantum mechanics itself \cite{Corbett2006, Pauli1933}. This task, known as \emph{quantum state tomography} \cite{Paris2004}, is not only useful in its own right (diagnosing and detecting errors in state preparation), but is also used in other characterization protocols including entanglement verification \cite{Steffen2006, Blume-Kohout2010c, VanEnk2007} and process tomography \cite{Anis2012}. A typical state tomography protocol proceeds as follows: many copies of $\rho_{0}$ are produced, they are measured in diverse ways, and finally the outcomes of those measurements (data) are collated and analyzed to produce an estimate $\rhohat$.  This is a straightforward statistical inference process \cite{Reid2015, Wasserman2004}, where the data are used to fit the parameters of a statistical model. In state tomography, the parameter is $\rho$, and the model is the set of all possible density matrices on a Hilbert space $\cH$ (equipped with the Born rule). However, we don't always know what model to use. It is not always \emph{a priori} obvious what $\cH$ or its dimension is; examples include optical modes \cite{Altepeter2005, Bertrand1987, Lvovsky2009, Breitenbach1997, Leonhardt1995} and leakage levels in AMO and superconducting \cite{Motzoi2009, Fazio1999} qubits. In such situations, we seek to let the data itself determine which of many candidate Hilbert spaces is best suited for reconstructing $\rho_{0}$.

Choosing an appropriate Hilbert space on the fly is an instance of a general statistical problem called \emph{model selection}.  Although model selection has been thoroughly explored in classical statistics \cite{Burnham2004}, its application to state tomography encounters some obstacles.  They stem from the fact that quantum states -- and therefore, estimates of them -- must satisfy a \emph{positivity constraint} $\rho\geq0$.  (See Figure \ref{fig:boundaries}.) A similar constraint, complete positivity, applies to process tomography.  The impact of positivity constraints on state and process tomography is an active area of research \cite{Candes2006, Flammia2012a, Suess2016, Carpentier2015}, and its implications for model selection have also been considered \cite{Schwarz2013a, Guta2012a, VanEnk2013a, Langford2013, Yin2011, Moroder2013, Knips2015}.  In this paper, we address a specific question at the heart of this matter:  \emph{How does the loglikelihood ratio statistic used in many model selection protocols, including (but not limited to) information criteria such as Akaike's AIC \cite{Akaike1974}, behave in the presence of the positivity constraint $\rho\geq0$}?

\begin{figure}
\includegraphics[width=\columnwidth]{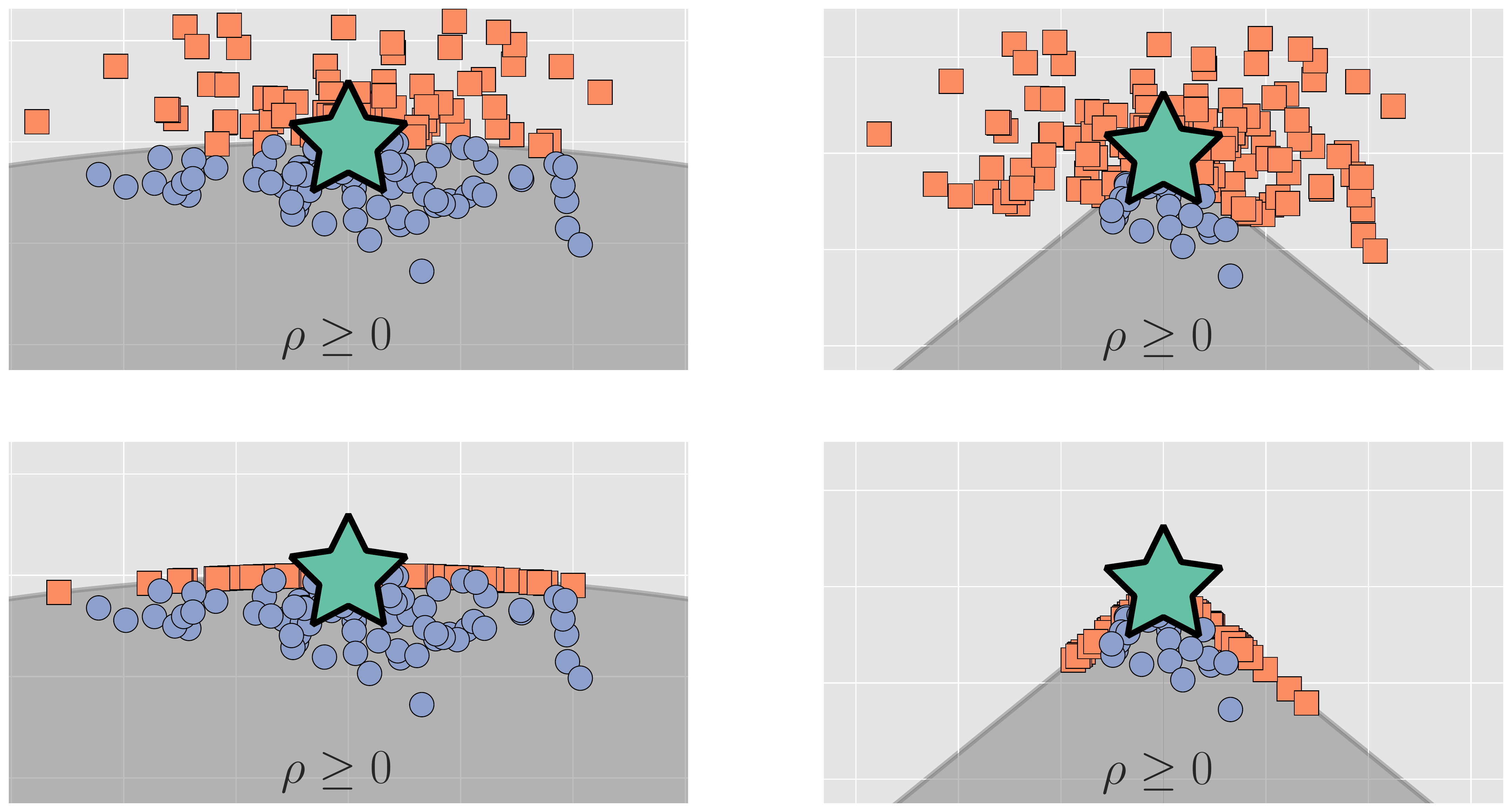}
 \caption{\textbf{Impact of the positivity constraint ($\rho\geq0$)  on tomographic estimates}:  This figure illustrates how the quantum state space's boundary -- which results from the constraint $\rho\geq0$ -- affects maximum likelihood (ML) tomography for a qutrit state $\rho_{0}$ (star).  Two different 2-dimensional cross-sections of the state space are shown, which correspond to a qubit (left) and a classical 3-outcome distribution (right). \textbf{Top}: Without the positivity constraint, some ML estimates (orange squares) are not valid estimates of a quantum state, because they are not positive semidefinite. However, some ML estimates (blue circles) are. Further, the ML estimates are Gaussian distributed.
\textbf{Bottom}:  Imposing the positivity constraint forces the (previously negative) ML estimates to ``pile up" on the boundary of state space; the distribution $\mathrm{Pr}(\hat{\rho}_{\mathrm{ML}})$ is not Gaussian, and local asymptotic normality is not satisfied. In turn, the assumptions necessary to invoke the Wilks theorem are not satisfied either.}
\label{fig:boundaries}
\end{figure}

We begin in Section \ref{sec:intro} by introducing the loglikelihood ratio statistic $\lambda$, and outline how it can be used to choose a Hilbert space.  In Section \ref{sec:qstmodelselection}, we show how and why the classical null theory for its behavior, the Wilks theorem, falls apart in the presence of the positivity constraint, because quantum state space does not generally satisfy \emph{local asymptotic normality} (LAN).  We define a new generalization of LAN, \emph{metric-projected local asymptotic normality} (MP-LAN), in Section \ref{sec:lanreplacement}; this generalization explicitly accounts for the positivity constraint, and is satisfied by quantum state space. Using this generalization, we derive a closed-form approximation for $\lambda$'s expected value in Section \ref{sec:computingllrs}, thereby providing a replacement for the Wilks theorem that \emph{is} applicable in state tomography.  Finally, we test the validity of our approximate theory under the assumptions used in its derivation (Section \ref{sec:theorycomp1}), and under harsh conditions by comparing it to numerical results for the realistic problem of optical heterodyne tomography (Section \ref{sec:heterotomo}).

\section{Introduction - Statistical Model Selection}
\label{sec:intro}
Discussing model selection for state tomography requires introducing some terminology/notation from statistics.  A \emph{model} is a parameterized family of probability distributions over some data $D$, usually denoted as $\mathrm{Pr}_{\bs{\theta}}(D)$, where $\bs{\theta}$ are the \emph{parameters} of the model. In state tomography, the parameters are a quantum state $\rho$, the data are the observed outcomes of the measurement of a positive operator-valued measure (POVM) $\{E_{j}\}$, and the probability of observing outcome ``$j$" is given by the Born rule: $p_{j} = \mathrm{Tr}(\rho E_{j})$ \footnote{The index $j$ may be continuous or discrete.}. In this paper, a model is a set of density matrices, and a state $\rho$ is a particular choice of the model's parameters.

Suppose we have data $D$ obtained from an unknown state $\rho_{0}$, and two candidate models $\M_{1}, \M_{2}$ that could be used to reconstruct it.  Many of the known methods for choosing between them (i.e., model selection) involve quantifying how well each model fits the data by its \emph{likelihood}.  The likelihood of a simple hypothesis $\rho$ is defined as $\mathcal{L}(\rho) = \mathrm{Pr}(D|\rho)$.  Models, however, are \emph{composite} hypotheses, comprising many possible values of $\rho$.  A canonical way to define model $\M$'s likelihood is via the general method of \emph{maximum likelihood} (ML), by maximizing $\cL(\rho)$ over $\rho\in\M$.  In practice, the maximization is usually done explicitly to find an ML estimate $\hat{\rho}_{\mathrm{ML},\M}$ \cite{Hradil1997, JamesPRA2001, Blume-Kohout2010} of $\M$'s parameters, and then $\cL(\M) = \cL(\hat{\rho}_{\mathrm{ML},\M})$.  (Although it is common to refer to $\hat\rho_{\mathrm{ML}}$ without specifying the model over which $\cL$ was maximized, we list the model explicitly because in this paper, we are frequently using many different models!)

Then, the models can be compared using the \emph{loglikelihood ratio statistic} \cite{Neyman1933, Blume-Kohout2010, Moroder2013}:
\begin{align}
\nonumber \lambda(\M_{1}, \M_{2}) &\equiv -2 \log \left(\frac{\cL(\M_{1})}{\cL(\M_{2})}\right)\\
\nonumber &= -2 \log \left(\frac{\cL(\rhoML{\M_{1}})}{\cL(\rhoML{\M_{2}})}\right)\\
&= -2 \log \left(\frac{\underset{\rho \in \M_{1}}{\max}~\cL(\rho)}{\underset{\rho \in \M_{2}}{\max}~\cL(\rho)}\right).
\end{align}
All else being equal, a positive $\lambda$ favors $\M_2$ -- i.e., the model with the higher likelihood is more plausible, because it fits the data better.  However, all else is rarely equal.  If both models are equally valid -- e.g., they both contain $\rho_0$ -- but $\M_2$ has more parameters, then $\M_2$ will very probably fit the data better.  Models with more adjustable parameters do a better job of fitting \emph{noise} (e.g., finite sample fluctuations) in the data.  This becomes strictly true when the models are \emph{nested}, so that $\M_{1} \subset \M_{2}$.  In this case, the likelihood of $\M_{2}$ is at least as high as that of $\M_{1}$;  not only is $\lambda \geq 0$, but almost surely $\lambda > 0$.

Remarkably, the same effect also makes $\M_{2}$'s fit less accurate (almost surely), because the fit incorporates more of the noise in the data.  These two effects constitute \emph{overfitting}, which can be summed up as ``Extra parameters make the fit look better, but perform worse.''.  An overfitted model would fit \emph{current} data extremely well,  but would fail to accurately predict \emph{future} data. This provides strong motivation to correct for overfitting by penalizing or handicapping larger models, to prevent them from being chosen over smaller models that are no less valid, and may even yield better estimates in practice \cite{Akaike1974}.

For this reason, any model selection method/criterion that relies (explicitly or implicitly) on a statistic to quantify ``how well model $\M$ fits the data'' also relies on a \emph{null theory} to predict how that statistic will behave if some \emph{null hypothesis} is true. For the model selection problems we consider, the null hypothesis is that $\rho_{0} \in \M$, and the null theory will tell us how statistics of interest behave when that null hypothesis is in fact true.  A model selection criterion based on an invalid null theory (or a criterion used in a context where its null theory does not apply) will tend to perform sub-optimally (as compared to a method based on a correct null theory).

The null theory can be used to formulate a \emph{decision rule} for choosing between models. If we know how the test statistic behaves when both models are equally valid, then we can evaluate the observed value of the statistic under the null theory. If the observed value is very improbable under the null theory, then that constitutes evidence against the smaller model, and justifies rejecting it. On the other hand, if the observed value is \emph{consistent} with the null theory, there is no reason to reject the smaller model.

The standard null theory for $\lambda$ is the \emph{Wilks theorem} \cite{Wilks1938}. It relies on \emph{local asymptotic normality} (LAN) \cite{LeCam1970, LeCam1956}. LAN is a property of $\M$; if $\M$ satisfies LAN, then as $N_{\mathrm{samples}}\rightarrow \infty$:
\begin{itemize}[nosep]
\item The ML estimate $\rhoML{\M}$ is normally distributed around $\rho_{0}$ with covariance matrix $\Fi^{-1}$:
\begin{equation}
\label{eq:landist}
\mathrm{Pr}(\rhoML{\M}) \propto \exp\left[-\mathrm{Tr}[(\rho_{0} - \rhoML{\M})\mathcal{I}(\rho_{0} -\rhoML{\M})]/2\right].
\end{equation}
\item The likelihood function in a neighborhood of $\rhoML{\M}$ is locally Gaussian with Hessian $\Fi$:
\begin{equation}
\label{eq:lanl}
\mathcal{L}(\rho) \propto \exp\left[-\mathrm{Tr}[(\rho - \rhoML{\M})\mathcal{I}(\rho - \rhoML{\M})]/2\right].
\end{equation}
\end{itemize}
Here, $\Fi$ is the (classical) \emph{Fisher information matrix} associated with the POVM. It quantifies how much information the data carry about a parameter in the model.  (Note that in expressions involving $\mathcal{I}$, states $\rho$ are treated as vectors in state space, and $\mathcal{I}$ is a matrix or 2-index tensor acting on that state space.)

Most statistical models satisfy LAN.  When LAN is satisfied \emph{and} $N_{\mathrm{samples}}$ is large enough to reach the ``asymptotic" regime, we can invoke the Wilks theorem to determine the behavior of $\lambda$. This theorem says that under suitable regularity conditions, if $\rho_{0}\in \M_{1}\subset \M_{2}$, where $\M_{2}$ has $K$ more parameters than $\M_{1}$, then $\lambda$ is a $\chi^{2}_{K}$ random variable.  This is a complete null theory for $\lambda$ (under the specified conditions), and implies that $\langle \lambda \rangle = K$ and $(\Delta \lambda)^{2} = 2K$.

Therefore, in the ``Wilks regime", a simple criterion for model selection would be to compare the observed value of $\lambda$ to $\lambda_{\mathrm{thresh}} = \langle \lambda \rangle + k\Delta \lambda$, for some $k \approx 1$, and reject the smaller model if $\lambda > \lambda_{\mathrm{thresh}}$.  While model selection rules can be more subtle and complex than this \cite{Akaike1974, Schwarz1978, Kass1995, Spiegelhalter2002}, they usually take the general form of a threshold in which $\expect{\lambda}$ plays a key role.  Rather than attempting to define a specific rule, our purpose in this paper is to understand the behavior of $\expect{\lambda}$ and derive an approximate expression for it in the context of state tomography.

The first step in doing so is to explain how and why the Wilks theorem breaks down in that context.

\section{Quantum State Tomography and The Breakdown of the Wilks Theorem}
\label{sec:qstmodelselection}
Quantum state tomography typically begins with $N_{\mathrm{samples}}$ independently and identically prepared quantum systems -- i.e., $N_{\mathrm{samples}}$ copies of an unknown state $\rho_{0}$.  Each copy is measured, and without loss of generality we can assume that each measurement is described by the same positive operator-valued measure (POVM).  A POVM is a collection of positive operators $\{E_j\}$ summing to $\Id$, and the probability of outcome ``$j$'' is given by $\Tr(\rho_0 E_j)$.  The results of all $N_{\mathrm{samples}}$ measurements constitute data, represented as a record of the frequencies of the possible outcomes $\{n_{j}\}$, where $n_{j}$ is the number of times ``$j$'' was observed, and $\sum_{j}n_{j} = N_{\mathrm{samples}}$.  Finally, this data is processed through some \emph{estimator} to yield an estimate of $\rho_0$, denoted $\hat{\rho}$ . 

Although a variety of estimators have been proposed \cite{Vogel1989,Hradil1997,JamesPRA2001,Blume-Kohout2010b,Blume-Kohout2010,Zhu2014a,Ferrie2016}, the exact estimator used is not our concern here.  However, since we \emph{are} concerned with computing the likelihood of a model $\M$, which is defined as the likelihood of the most likely $\rho\in\M$, we will make extensive use of the \emph{maximum likelihood} (ML) estimator.  This should not be taken as advocacy for the ML estimator; it is only a convenient way to find the maximum of $\cL$ over $\M$, and once a model is chosen, a different estimator could be used.
The likelihood $\mathcal{L}(\rho)$ is
\begin{equation}
\nonumber \mathcal{L}(\rho) = \prod_{j}\mathrm{Tr}(\rho E_{j})^{n_{j}},
\end{equation}
and $\rhoML{\M}$ is the solution to the optimization problem
\begin{equation}
\nonumber \rhoML{\M} = \underset{\rho \in \M}{\text{argmax}}~\mathcal{L}(\rho).
\end{equation}
In state tomography, $\M$ is almost always the set of all density matrices over a Hilbert space $\cH$:
\begin{equation}
\nonumber \mathcal{M}_{\cH} = \{\rho~|~\rho \in \mathcal{B}(\mathcal{H}),~\mathrm{Tr}(\rho) =1,~\rho \geq 0\},
\end{equation}
where $\mathcal{B}(\cH)$ is the space of bounded linear operators on $\cH$.  To determine $\rhoML{\M}$, we can use the following facts: (a) $\M_{\cH}$ is a convex set, and (b) $\rhoML{\M}$ minimizes the value of the convex function $-\log[\mathcal{L}(\rho)]$. Because $\rhoML{\M}$ is the solution to minimizing a convex function  over a convex set, it can be found efficiently via any of several algorithms for convex optimization \cite{Boyd}.

Usually, $\cH$ is taken for granted or chosen by fiat.  In this paper, we will consider a nested family of different Hilbert spaces, indexed by their dimension $d$: $\cH_{1}  \subset \cdots \subset \cH_{d} \subset \cH_{d+1} \subset \cdots$.  The models we consider are therefore given by:
\begin{equation}
\label{eq:modelsd}
\M_{d} \equiv \mathcal{M}_{\cH_{d}} = \{\rho~|~\rho \in \mathcal{B}(\mathcal{H}_{d}),~\mathrm{Tr}(\rho) =1,~\rho \geq 0\}.
\end{equation}
For notational brevity, we will use $\rhoML{d}$ to denote the ML estimate over $\M_{d}$. To select between these models, we need to determine whether one model (say, $\M_{d + 1}$) is ``better'' than another (say, $\M_{d}$).  To evaluate which is better, we typically compute the likelihood of each model, and then use $\lambda(\M_{d}, \M_{d+1})$ to choose between them. As mentioned in the previous section, this requires having a \emph{null theory} for $\lambda$ that describes its behavior when $\rho_{0} \in \M_{d} \subset \M_{d + 1}$.

The Wilks theorem, which is the classical null theory for $\lambda$, relies on local asymptotic normality (LAN). If the models under consideration satisfy LAN, then as mentioned in the previous section, the likelihood $\mathcal{L}(\rho)$ is Gaussian with a Hessian given by the Fisher information. In classical statistics, it is common to assume that boundaries are not relevant, either because the models of interest have none, or because the true parameter values $\rho_{0}$ lie far away from them.  In the absence of boundaries, and in the asymptotic limit where the curvature of the Fisher information metric is also negligible, many calculations can be simplified by changing to \emph{Fisher-adjusted} coordinates in which the Fisher information is isotropic (i.e., $\mathcal{I}\propto\Id$). Under these assumptions and simplifications, the Wilks theorem can be derived.

In quantum state tomography, the Wilks theorem breaks down for two reasons. First, the quantum state space \emph{does} have boundaries.  Second, the Fisher information is anisotropic, and the anisotropy can't easily be eliminated by a coordinate change because those boundaries define a preferred coordinate system. We discuss these obstacles -- and our plan to address them -- in detail in the remainder of this section.

Given a model $\M_{d}$, its boundary is the set of rank-deficient states within it. When $\rho_{0}\in \M_{d}$ and is full rank, LAN will hold -- which is to say that, asymptotically, the boundary can indeed be ignored. But when $\rho_{0}$ is rank-deficient, it lies \emph{on} the boundary of the model.  LAN is not satisfied, because positivity constrains $\rhoML{d}$, and so $\mathrm{Pr}(\rhoML{d})$ is not Gaussian (see Figure \ref{fig:boundaries}). The Wilks theorem does not apply in this case, and its predictions regarding $\langle \lambda \rangle$ aren't even close (see Figure \ref{fig:boundaries2}). Moreover, this is the relevant situation for our analysis, because \emph{even if $\rho_{0}$ is full-rank in $\M_{d}$, it must be rank-deficient in $\M_{d+1}$}. So we require a replacement for the Wilks theorem; that is, we need a null theory for $\lambda$ when $\rho_0$ is rank-deficient.

One challenge in deriving this replacement is that the Fisher information generally depends strongly on $\rho_{0}$ and the POVM being measured (see Figure \ref{fig:anisofi}).  In many standard derivations, such anisotropy has no impact and can be eliminated easily by changing to Fisher-adjusted coordinates.  But the models we consider (quantum states) have boundaries that break scale-invariance, and define preferred coordinate systems. Changing to Fisher-adjusted coordinates does not eliminate the effect of anisotropy, because the boundary has a new shape in the new coordinates that serves as a record of the anisotropy.  Moreover, the methods we derive here for calculating the impact of the boundary rely heavily on a particular coordinate system (Hilbert-Schmidt coordinates), and changing to Fisher-adjusted coordinates would break them.  This makes it very difficult to derive an precise generalization of the Wilks theorem for \emph{arbitrary} Fisher information, so to derive our results we make the key simplifying assumption that the Fisher information is isotropic with respect to Hilbert-Schmidt metric.  This is almost never exactly true in practice \footnote{Jonathan A Gross, private communication}, but it is reasonable to presume that our results remain useful and approximately true when the Fisher information is \emph{almost} isotropic.  Our results actually appear to be surprisingly robust to significant anisotropy.  In Section \ref{sec:heterotomo}, we perform numerical simulations of heterodyne tomography -- in which the condition number of the Fisher information ranges from $10^1$ to $10^9$ (1 corresponds to isotropic Fisher information) -- and find that our new theory remains reasonably accurate even in this extreme scenario.

To derive our replacement for the Wilks theorem, we first need a new framework for reasoning about models with convex constraints. We develop such a framework in the next section by defining a new generalization of LAN.

\begin{figure}
\includegraphics[width=\columnwidth]{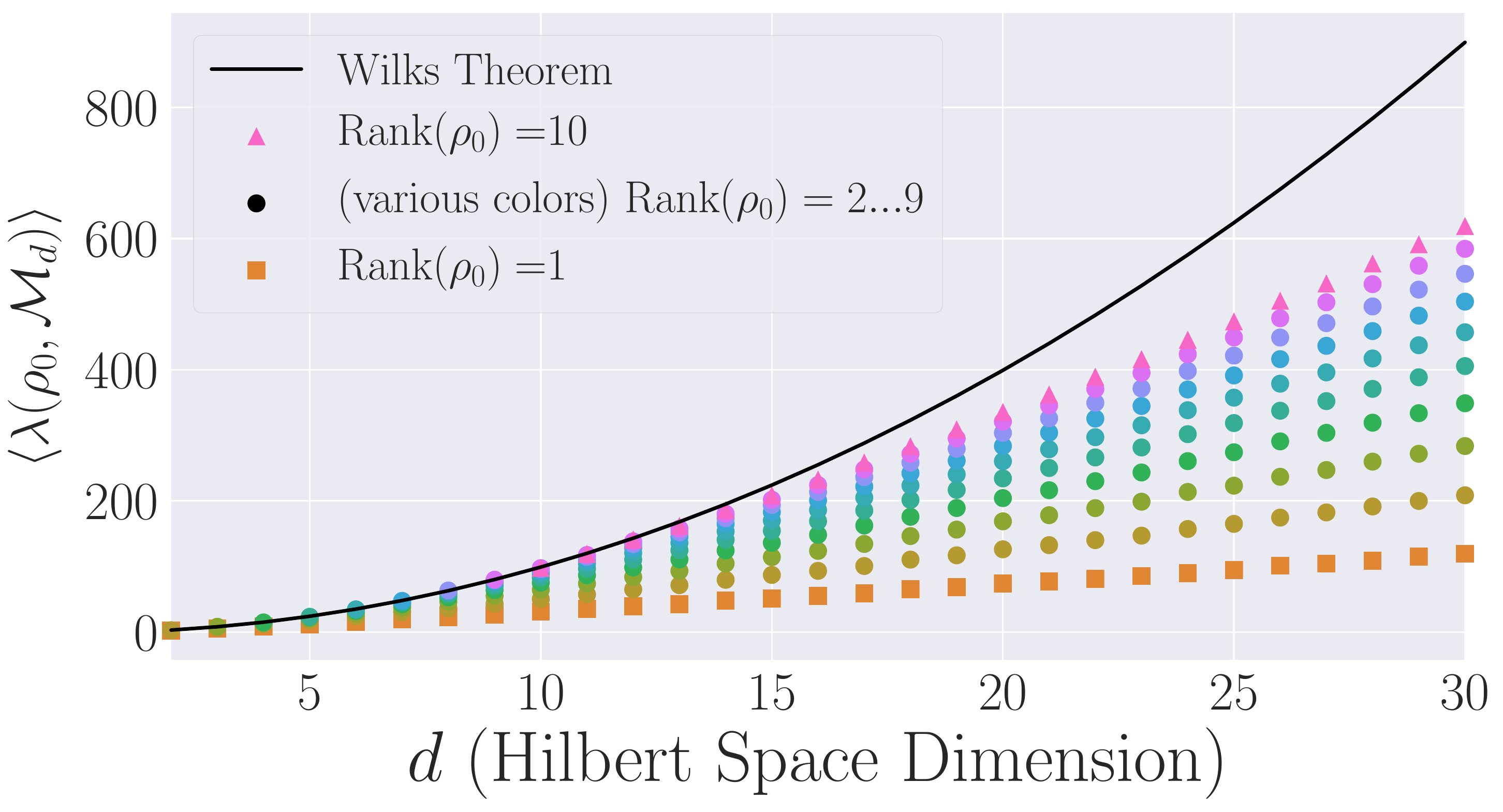}
 \caption{\textbf{Predictions of the Wilks theorem vs reality:}  In the context of state tomography on a true state $\rho_0$ in a $d$-dimensional Hilbert space, the Wilks theorem can be used to predict that, when comparing the zero-parameter model $\M_{0} = \{\rho_0\}$ and the $(d^2-1)$-parameter model $\mathcal{M}_d$ defined in Equation \eqref{eq:modelsd}, the expected loglikelihood ratio $\langle \lambda(\M_{0}, \M_{d})\rangle$ will be $d^2-1$.  Here, we compare that prediction to numerical simulations of tomography on states $\rho_0$ in dimension $d=2,\ldots,30$, with ranks $r=1,\ldots,\mathrm{min}(10,d)$.  The Wilks theorem only predicts $\expect{\lambda}$ correctly for full-rank states; when $r \ll d$, the actual expected loglikelihood ratio is much smaller. Our main result (Equation \ref{eq:ourLLRS}) gives a replacement that works correctly (see Figure \ref{fig:modelcomp-iso}).}
\label{fig:boundaries2}
\end{figure}

\section{Generalizing LAN to Deal with Constraints}
\label{sec:lanreplacement}

In this section, we develop a framework that will allow us to derive a replacement for the Wilks theorem that holds for rank-deficient $\rho_{0}$. To do so, we define a generalization of LAN in the presence of boundaries, which we call \emph{metric-projected local asymptotic normality} (MP-LAN). (For other generalizations of LAN, see \cite{Roussas2010, Jeganathan1982}.) Like LAN, MP-LAN is a property that a statistical model may satisfy. Unlike LAN, MP-LAN is satisfied by quantum state space. For any model that satisfies MP-LAN (quantum or classical), we compute an asymptotically exact expression for $\lambda$, a necessary building block in our replacement for the Wilks theorem. 

In Section \ref{sec:computingllrs}, we show that the models $\M_{d}$ satisfy MP-LAN, and derive an approximation for $\expect{\lambda}$  (Equation \eqref{eq:ourLLRS}, on  page \pageref{eq:ourLLRS}).  Section \ref{sec:theorycomp1} compares our theory to numerical results, and Section \ref{sec:heterotomo} applies our theory to the problem of heterodyne tomography.

The reader should note that to enhance readability, in this section (and only this section) we use $N$ to denote the number of samples, previously denoted as $N_{\mathrm{samples}}$.

\subsection{Definitions and Overview of Results}
\label{sec:lanoverview}
The main definitions and results required for the remainder of the paper are presented in this subsection. Technical details and proofs are presented in the next subsection.

\begin{mydef}[Metric-projected local asymptotic normality, or MP-LAN]
A  model $\M$ satisfies MP-LAN if $\M$ is a convex subset of a model $\M'$ that satisfies LAN.
\end{mydef}

While there are many possible choices for the unconstrained model $\M'$, we will find it useful to let $\M'$ be a model whose dimension is the same as $\M$, but where any of the constraints that define $\M$ are lifted. (For example, in Lemma \ref{lem:qlan}, we will take $\M'$ to be Hermitian matrices of dimension $d$.) Other choices of $\M'$ are possible, but we do not explore those here.

Although the definition of MP-LAN is rather short, it implies some very useful properties. These properties follow from the fact that, as $N \rightarrow \infty$, the behavior of $\rhoML{\M}$ and $\lambda$ is entirely determined by their behavior in an arbitrarily small region of $\M$ around $\rho_{0}$, which we call the \emph{local state space}.

\begin{mydef}[Local state space]
For each natural number $N$, let $I_N$ be the Fisher information matrix of $\mathcal{M}$ at $\rho_0$, and let $
\mathcal{M}_N$ be the set obtained by re-scaling each point in $\mathcal{M}$ by $I_{N}^{-1/2}$. For each $N$, let
 $C_N$ be a convex subset of $\mathcal{M}_N$, chosen so that (a) $C_{N+1}$ contains $C_N$, and (b) $\lim_{N\rightarrow \infty} \mathrm{Pr}(\rhoML{\mathcal{M}} \in C_N) = 1$. Then the sequence $\{C_N: N=1,2...\}$ converges to the local state space around $\rho_0$.
\end{mydef}

Models that satisfy MP-LAN have the following \emph{asymptotic} properties:
\begin{itemize}[nosep]
\item The local state space is the \emph{solid tangent cone} of the model at $\rho_{0}$, denoted $T(\rho_{0})$.
\item The ML estimate $\rhoML{\M}$ is given by the \emph{metric projection} of $\rhoML{\M'}$ onto $T(\rho_{0})$:
\begin{equation}
\label{eq:MP-LANmle}
\rhoML{\M} = \underset{\rho \in T(\rho_{0})}{\text{argmin}}~(\rho  -\rhoML{\M'})\mathcal{I}(\rho  -\rhoML{\M'}).
\end{equation}
(We first encountered the term ``metric projection" in the convex optimization literature \cite{McCoy2014, Amelunxen2014}, and inspires our choice for the acronym ``MP-LAN". However, it should be noted that in the problem setting considered in those references, $\mathcal{I} = \Id$.)

\item The loglikelihood ratio $\lambda(\rho_{0}, \M)$, defined as
\begin{equation}
\label{eq:llrs_lan_2}
\lambda(\rho_{0}, \M) = -2 \log \left(\frac{\cL(\rho_{0})}{\underset{\rho \in \M}{\max}~\cL(\rho)}\right),
\end{equation}
takes the following simple form:
\begin{equation}
\label{eq:llrs_lan}
\lambda(\rho_{0}, \M) =  \mathrm{Tr}[(\rho_{0} - \rhoML{\M})\mathcal{I}(\rho_{0} - \rhoML{\M})].
\end{equation}
(This property is non-trivial; see Figure \ref{fig:llrs_MP-LAN}.)
\end{itemize}

Even when $\M$ satisfies MP-LAN, these properties may not be true when $N$ is finite; they are guaranteed only in the asymptotic limit. When $N$ is sufficiently large, we can (and will!) use the asymptotic properties above.

The following subsection presents the technical details and definitions necessary to show the above results. The reader may skip it without loss of continuity, and proceed to Section \ref{sec:computingllrs}.

\begin{figure}
\includegraphics[width=.75\columnwidth]{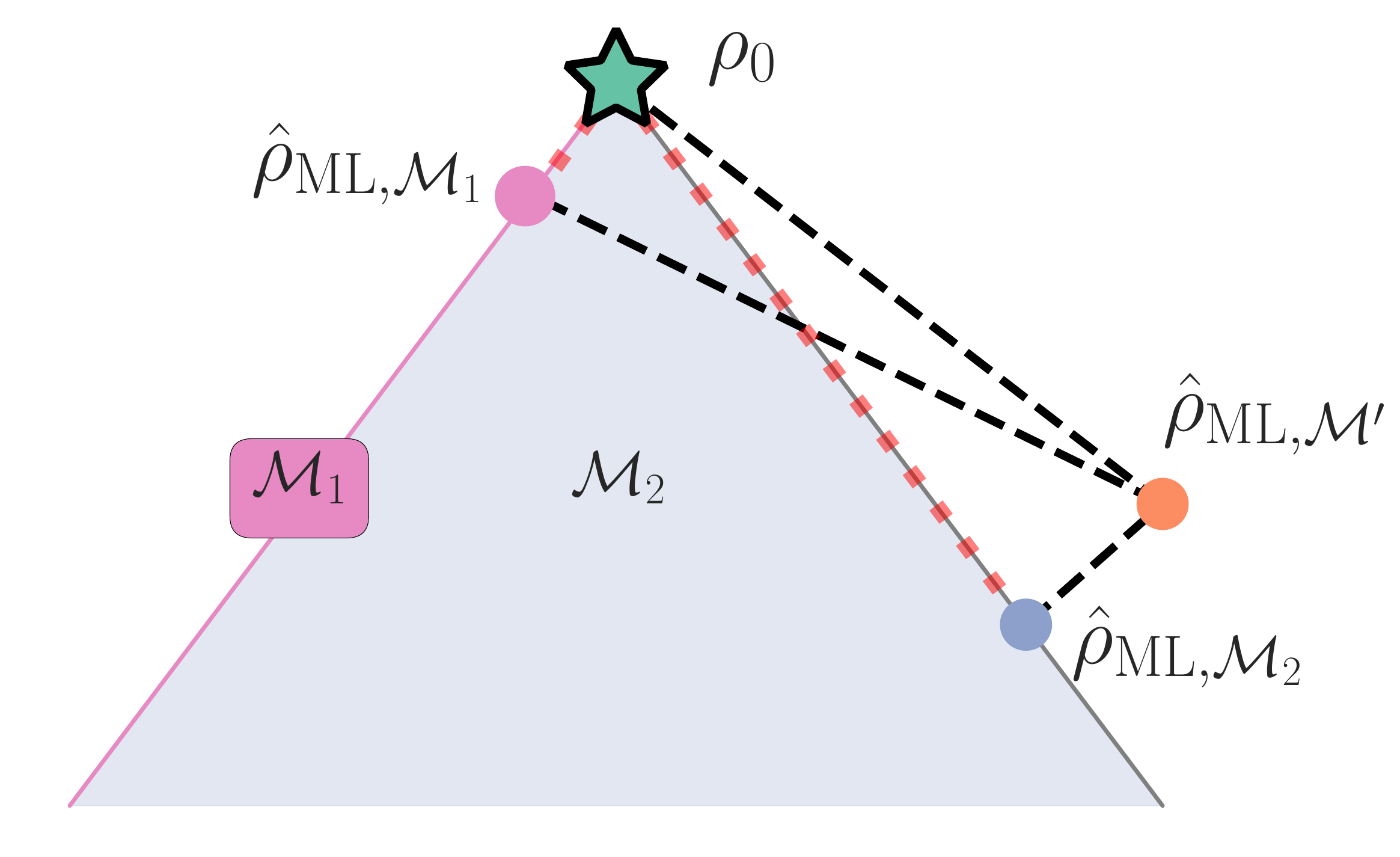}
 \caption{\textbf{Equivalence of $\lambda$ and squared distance when MP-LAN is satisfied:} For any model $\M_{k}$, $\lambda(\rho_{0}, \M_{k})$ is the difference between the squared distance from $\rho_{0}$ to $\rhoML{\M'}$ and that from $\rhoML{\M_{k}}$ to $\rhoML{\M'}$ (black lines). If $\M_{k}$ satisfies MP-LAN, then (a) $\M_{k} \subset \M'$ for an \emph{unconstrained model} $\M'$, and (b) $\lambda$ is equal, in the asymptotic limit, to the squared distance from $\rhoML{\M_{k}}$ to $\rho_{0}$ (red lines), because $\rho_{0}, \rhoML{\M'}$, and $\rhoML{\M_{k}}$ form a right triangle. This is also true for models with \emph{curved} boundaries (such as quantum state space) because asymptotically, the local state space is the solid tangent cone, whose boundaries are always \emph{flat}.}
\label{fig:llrs_MP-LAN}
\end{figure}

\subsection{Technical Details}

Assume a statistical model $\M$ that satisfies MP-LAN. Below, we prove the properties of $\M$ asserted in Section \ref{sec:lanoverview}. 

\subsubsection{Convergence of the Local State Space to the Solid Tangent Cone}

Because $\M$ satisfies MP-LAN, there exists a model $\M' \supset \M$ of dimension $d'$ that satisfies LAN. This means that as $N \rightarrow \infty$, the distribution of $\rhoML{\M'}$ converges to a Gaussian:
\[\mathrm{Pr}(\rhoML{\M'})\xrightarrow[]{\text{d}} \mathcal{N}(\rho_{0}, \Sigma/N),\]
where $\xrightarrow[]{\text{d}}$ means ``converges in distribution to", and $\Sigma = \mathcal{I}^{-1}$.
The shape of the distribution is entirely determined by $\mathcal{I}$. As $N \rightarrow \infty$, this Gaussian distribution becomes more and more tightly concentrated around $\rho_{0}$. Although there is always a non-zero probability that $\rhoML{\M'}$ will be arbitrarily far away from $\rho_{0}$, it is possible to define a sequence of balls $B_{N}$ that shrink with $N$, yet contain every $\rhoML{\M'}$ with probability 1 as $N \rightarrow \infty$.

First, we switch coordinates by sending $\rho \rightarrow \rho - \rho_{0}$, establishing $\rho_{0}$ as the origin of the coordinate system. In these coordinates, $\rhoML{\M'} \sim \mathcal{N}(0, \Sigma/N)$, and the following lemma constructs $B_{N}$.

\begin{lem}
\label{eq:lem}
Let $\rhoML{\M'} \sim \mathcal{N}(0, \Sigma/N)$, and let $\lambda_{\max}(\Sigma)$ denote the largest eigenvalue of $\Sigma$. Define $B_{N} = \{\rho \in \M' ~|~\mathrm{Tr}(\rho^{2}) \leq r^{2}\}$, where $r = \sqrt{\lambda_{\max}(\Sigma)}/N^{1/4}$. Then, $\lim_{N \rightarrow \infty} \mathrm{Pr}(\rhoML{\M'} \in B_{N}) =1$.
\end{lem}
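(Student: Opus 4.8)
The plan is to exploit the separation of scales built into the radius $r=\sqrt{\lambda_{\max}(\Sigma)}/N^{1/4}$. The Gaussian $\mathcal{N}(0,\Sigma/N)$ has fluctuations of typical size $\sim N^{-1/2}$ in Hilbert--Schmidt norm, whereas $r\sim N^{-1/4}$ shrinks to zero far more slowly. Thus the ball $B_{N}$, while collapsing onto $\rho_{0}$, does so slowly enough that it should eventually capture essentially all of the probability mass. I would make this precise with a one-step Markov (first-moment) bound on the complementary event, since no sharper concentration inequality is needed to drive the probability to $1$.

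Concretely, I would first observe that $\rhoML{\M'}\in\M'$ holds by construction (it is the maximizer of $\cL$ over $\M'$), so the constraint $\rho\in\M'$ in the definition of $B_{N}$ is automatically satisfied and membership in $B_{N}$ reduces to the single inequality $\Tr\!\big((\rhoML{\M'})^{2}\big)\le r^{2}$. Writing $X=\Tr\!\big((\rhoML{\M'})^{2}\big)\ge 0$ for the squared Hilbert--Schmidt norm, the next step is to compute its mean. Expanding $\rhoML{\M'}$ in an orthonormal Hilbert--Schmidt basis turns $X$ into a sum of squared (real) Gaussian components, so its expectation is the trace of the covariance:
\begin{equation}
\nonumber \mathbb{E}[X]=\Tr(\Sigma/N)=\Tr(\Sigma)/N.
\end{equation}

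I would then apply Markov's inequality to the nonnegative variable $X$ and substitute $r^{2}=\lambda_{\max}(\Sigma)/\sqrt{N}$:
\begin{equation}
\nonumber \mathrm{Pr}(\rhoML{\M'}\notin B_{N})=\mathrm{Pr}(X>r^{2})\le\frac{\mathbb{E}[X]}{r^{2}}=\frac{\Tr(\Sigma)}{\lambda_{\max}(\Sigma)\,\sqrt{N}}.
\end{equation}
Since $\Tr(\Sigma)/\lambda_{\max}(\Sigma)$ is a finite constant (at most $d'$, the dimension of $\M'$, because every eigenvalue of $\Sigma$ is bounded by $\lambda_{\max}(\Sigma)$), this bound vanishes as $N\to\infty$, yielding $\lim_{N\to\infty}\mathrm{Pr}(\rhoML{\M'}\in B_{N})=1$.

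There is no serious obstacle here; the whole argument is a first-moment estimate. The only point requiring a moment's care is the mean computation $\mathbb{E}[X]=\Tr(\Sigma)/N$, which rests on reading $\Tr(\rho^{2})$ as the squared Euclidean length of the coordinate vector of $\rho$ in an orthonormal Hilbert--Schmidt basis -- precisely why the Hilbert--Schmidt inner product (rather than a Fisher-adjusted one) is the natural metric in which to phrase the lemma. I would also note that the exponent in $r\propto N^{-1/4}$ is inessential: any exponent strictly between $0$ and $1/2$ makes both $r\to 0$ and the Markov bound $\to 0$, so the conclusion is robust and $1/4$ is merely a convenient midpoint that guarantees $B_{N}$ shrinks onto $\rho_{0}$ while still absorbing all the mass.
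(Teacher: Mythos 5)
Your proof is correct, and it takes a genuinely different route from the paper's. The paper argues geometrically: it introduces an auxiliary ellipsoid $B^{0}_{N} = \{\rho \in \M'~|~\Tr(\rho\Sigma^{-1}\rho) \leq N^{-1/2}\}$, whitens coordinates via $\sigma = N^{1/2}\Sigma^{-1/2}\rho$ so that the squared norm of the whitened estimate is exactly a $\chi^{2}_{d'}$ random variable, concludes $\lim_{N\to\infty}\mathrm{Pr}(\rhoML{\M'}\in B^{0}_{N}) = \int_{0}^{\infty}\chi^{2}_{d'}(z)\,\d z = 1$, and then shows $B^{0}_{N}\subset B_{N}$ by checking that the longest semi-major axis of the ellipsoid, $\sqrt{\lambda_{\max}(\Sigma)}/N^{1/4}$, equals the radius $r$ of the ball, finishing with monotonicity of probability and the squeeze theorem. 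You replace all of this with a single first-moment estimate: $\mathbb{E}[\Tr(\rhohat^{2}_{\scriptscriptstyle{\mathrm{ML},\M'}})] = \Tr(\Sigma)/N$ followed by Markov's inequality. Each approach buys something. Yours is more elementary, produces an explicit convergence rate $\mathrm{Pr}(\rhoML{\M'}\notin B_{N}) \leq \Tr(\Sigma)/(\lambda_{\max}(\Sigma)\sqrt{N}) \leq d'/\sqrt{N}$ rather than a bare limit, and -- a genuine technical advantage -- never requires $\Sigma$ to be invertible, whereas the paper's whitening step $\Sigma^{-1/2}$ silently assumes a nondegenerate covariance. The paper's route, by identifying the exact $\chi^{2}_{d'}$ law of the whitened squared norm, could deliver exponentially small tail bounds if sharper control were ever needed, and its explicit ellipsoid-inscribed-in-ball picture matches the geometric style of the surrounding argument (the same $B_{N}$ is reused to build the tangent-cone limit). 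Your closing remark that any radius exponent strictly between $0$ and $1/2$ suffices is also correct and consistent with the paper's construction, where $1/4$ is likewise only a convenient choice separating the $N^{-1/2}$ fluctuation scale from the shrinkage of $B_{N}$.
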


\begin{proof}
Let $B^{0}_{N}$ be an ellipsoidal ball defined by $\{\rho \in \M'~|~\mathrm{Tr}(\rho \Sigma^{-1} \rho) \leq 1/N^{1/2}\}$. Change coordinates by defining $\sigma = N^{1/2}\Sigma^{-1/2}\rho$. In these new coordinates $\hat{\sigma}_{\mathrm{ML},\M'} \sim \mathcal{N}(0, \Id_{d'})$, and $B^{0}_{N} = \{\sigma \in \M'~|~\mathrm{Tr}(\sigma^{2}) \leq N^{1/2}\}$. Therefore,
\begin{align*}
\mathrm{Pr}(\hat{\sigma}_{\mathrm{ML},\M'} \in B^{0}_{N}) &= \mathrm{Pr}(\mathrm{Tr}(\hat{\sigma}_{\mathrm{ML},\M'}^{2}) \leq N^{1/2})\\
&= \int_{0}^{N^{1/2}}\chi^{2}_{d'}(z)~dz,
\end{align*}
because $\mathrm{Tr}(\hat{\sigma}_{\mathrm{ML},\M'}^{2})$ is a $\chi^{2}_{d'}$ random variable. It follows that
\[\lim_{N \rightarrow \infty}~\mathrm{Pr}(\hat{\sigma}_{\mathrm{ML},\M'} \in B^{0}_{N})  =\int_{0}^{\infty}\chi^{2}_{d'}(z)~dz =1.\]
Switching back to the original coordinates, we have
\[B^{0}_{N} = \{\rho \in \M'~|~\mathrm{Tr}(\rho \Sigma^{-1} \rho) \leq 1/N^{1/2}\},\]
and $\lim_{N \rightarrow \infty}~\mathrm{Pr}(\rhoML{\M'} \in B^{0}_{N}) = 1$.

Now that we know $B^{0}_{N}$ contains all  $\rhoML{\M'}$ as $N\rightarrow \infty$, we can now show the same holds true for $B_{N}$. It suffices to show $B^{0}_{N} \subset B_{N}$. To see that this is the case, write the equation for $B^{0}_{N}$ in the standard quadratic form for an ellipsoid:
\[B^{0}_{N} = \{\rho \in \M'~|~\mathrm{Tr}(\rho ( N^{1/2}\Sigma^{-1} ) \rho) \leq 1\}.\]
The standard ellipsoid $\{\mathbf{x}~|~\mathbf{x}^{T}A\mathbf{x} \leq 1\}$ has semi-major axes whose lengths $s_{j}$ are related to the eigenvalues $a_{j}$ of $A$: $s_{j} = 1/\sqrt{a_{j}}$. The matrix $A = N^{1/2}\Sigma^{-1}$ has eigenvalues $N^{1/2}/\lambda_{j}$, where $\lambda_{j}$ are the eigenvalues of $\Sigma$. Thus, the lengths of the semi-major axes of $B^{0}_{N}$ are given by $s_{j} = 1/\sqrt{N^{1/2}/\lambda_{j}} = \sqrt{\lambda_{j}}/N^{1/4}$. Letting $\lambda_{\max}(\Sigma)$ denote the largest eigenvalue of $\Sigma$, the longest semi-major axis of $B^{0}_{N}$ has length $\sqrt{\lambda_{\max}(\Sigma)}/N^{1/4}$.
Because $B_{N}$ is a ball whose radius is equal to this length, $B_{N}$ circumscribes $B^{0}_{N}$, and $B^{0}_{N} \subset B_{N}$.

As $B^{0}_{N} \subset B_{N}$, it follows from the monotonicity of probability that $\mathrm{Pr}(\rhoML{\M'} \in B^{0}_{N}) \leq \mathrm{Pr}(\rhoML{\M'} \in B_{N})$. As the asymptotic limit of $\mathrm{Pr}(\rhoML{\M'} \in B^{0}_{N})$ is 1, and $\mathrm{Pr}(\rhoML{\M'} \in B_{N})$ itself is bounded above by 1, it follows from the squeeze theorem that $\lim_{N\rightarrow \infty}\mathrm{Pr}(\rhoML{\M'} \in B_{N})=1$.
\end{proof}

Informally speaking, Lemma \ref{eq:lem} implies that as $N\rightarrow \infty$ ``all the action" about $\rhoML{\M'}$ takes place inside $B_{N}$. Accordingly, to understand the behavior of quantities which depend on $\rhoML{\M'}$ (such as $\rhoML{\M}$ and $\lambda$), it is sufficient to consider their behavior within $B_{N}$. In fact, we can show that, asymptotically, all the $\rhoML{\M}$ are contained within the region $C_{N} \equiv B_{N} \cap \M$:

\begin{lem}
\label{eq:rhoMLlem}
$\lim_{N\rightarrow \infty}\mathrm{Pr}(\rhoML{\M} \in C_{N}) = 1$.
\end{lem}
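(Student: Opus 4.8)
The plan is to exploit the defining property of MP-LAN---that $\M$ is a convex subset of the LAN model $\M'$---together with the concentration result of Lemma~\ref{eq:lem}. The central observation is that, asymptotically, $\rhoML{\M}$ is the \emph{metric projection} of $\rhoML{\M'}$ onto $\M$ with respect to the Fisher metric $\mathcal{I}$, and that metric projection onto a convex set containing $\rho_0$ can only move a point \emph{closer} to $\rho_0$. Hence $\rhoML{\M}$ is at least as close to $\rho_0$ as $\rhoML{\M'}$ is, so whenever $\rhoML{\M'}$ lands in a ball centered at $\rho_0$, so does $\rhoML{\M}$; containment in $C_N$ then follows immediately from Lemma~\ref{eq:lem}.

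First I would make the projection statement precise. Since $\M'$ satisfies LAN, Equation~\eqref{eq:lanl} gives that in a neighborhood of $\rhoML{\M'}$ the likelihood is Gaussian, $\mathcal{L}(\rho)\propto\exp[-\mathrm{Tr}[(\rho-\rhoML{\M'})\mathcal{I}(\rho-\rhoML{\M'})]/2]$. By Lemma~\ref{eq:lem} the estimate $\rhoML{\M'}$ concentrates, as $N\to\infty$, in an arbitrarily small ball around $\rho_0$; because $\rho_0\in\M\subset\M'$, the maximizer of $\mathcal{L}$ over the convex set $\M$ must also lie in this shrinking neighborhood, so maximizing $\mathcal{L}$ over $\M$ is asymptotically equivalent to
\[\rhoML{\M} = \underset{\rho\in\M}{\mathrm{argmin}}~\mathrm{Tr}[(\rho-\rhoML{\M'})\mathcal{I}(\rho-\rhoML{\M'})],\]
i.e., $\rhoML{\M}$ is the $\mathcal{I}$-metric projection of $\rhoML{\M'}$ onto $\M$.

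Next I would pass to the isotropic coordinates $\sigma = N^{1/2}\Sigma^{-1/2}\rho$ used in the proof of Lemma~\ref{eq:lem}, in which the Fisher metric becomes Euclidean and the ellipsoidal ball $B^{0}_{N}$ becomes a Euclidean ball centered at the origin $\rho_0$. In these coordinates the metric projection onto the convex image of $\M$ is ordinary Euclidean projection, which is nonexpansive; since $\rho_0$ maps to the origin and the origin belongs to $\M$, it is its own projection, and nonexpansiveness gives $\|\rhoML{\M}-\rho_0\|\le\|\rhoML{\M'}-\rho_0\|$. Consequently, $\rhoML{\M'}\in B^{0}_{N}$ implies $\rhoML{\M}\in B^{0}_{N}$; and since $\rhoML{\M}\in\M$ by construction and $B^{0}_{N}\subset B_{N}$ (shown in Lemma~\ref{eq:lem}), we conclude $\rhoML{\M}\in B_{N}\cap\M = C_{N}$ on the event $\{\rhoML{\M'}\in B^{0}_{N}\}$. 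Monotonicity of probability then yields $\mathrm{Pr}(\rhoML{\M}\in C_N)\ge\mathrm{Pr}(\rhoML{\M'}\in B^{0}_N)$, and since the latter tends to $1$ while probabilities are bounded above by $1$, the squeeze theorem gives $\lim_{N\to\infty}\mathrm{Pr}(\rhoML{\M}\in C_N)=1$.

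The main obstacle is the rigorous justification of the projection identity in the second step: LAN supplies only \emph{asymptotic} Gaussianity of the likelihood with error terms that vanish as $N\to\infty$, so one must control the constrained maximization uniformly and argue that the boundary of $\M$ introduces no pathologies. Convexity of $\M$ is what rescues this step---the projection onto a closed convex set is unique and firmly nonexpansive---but care is needed to show that the finite-$N$ maximizer genuinely converges to the metric projection rather than being trapped elsewhere on $\partial\M$, and that the shrinking neighborhood from Lemma~\ref{eq:lem} indeed captures essentially all of the likelihood's mass.
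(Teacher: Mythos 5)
Your proof is correct, but it reaches the conclusion by a genuinely different route than the paper. The paper's proof uses the law of total probability to reduce the claim to showing $\mathrm{Pr}(\rhoML{\M} \in C_{N} \,|\, \rhoML{\M'} \in B_{N}) = 1$, and establishes this by contradiction: if $\rhoML{\M} \notin C_{N}$, the closest point $\rho_{C} \in C_{N}$ to $\rhoML{\M}$ would be closer to $\rhoML{\M'}$, contradicting the (implicitly assumed) characterization of $\rhoML{\M}$ as the point of $\M$ nearest to $\rhoML{\M'}$. You instead invoke the standard fact that metric projection onto a convex set is nonexpansive: working in the Fisher-adjusted coordinates where $B^{0}_{N}$ is a Euclidean ball centered at $\rho_{0}$ and $\rho_{0} \in \M$ is its own projection, you get $\|\rhoML{\M} - \rho_{0}\| \le \|\rhoML{\M'} - \rho_{0}\|$ directly, so the event $\{\rhoML{\M'} \in B^{0}_{N}\}$ implies $\{\rhoML{\M} \in C_{N}\}$. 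Your version buys two things. First, it replaces the paper's somewhat loosely justified step (``$\rho_{C}$ is closer to $\rhoML{\M'}$ than $\rhoML{\M}$'' is asserted without the separating-hyperplane or obtuse-angle argument it actually needs) with a textbook theorem. Second, by routing the argument through the Fisher-metric ellipsoid $B^{0}_{N}$ rather than the Hilbert-Schmidt ball $B_{N}$, you keep the metric used for projection consistent with the geometry of the containing set, a point the paper's proof glosses over since nonexpansiveness (or nearest-point reasoning) in the Fisher metric does not automatically interact well with a Hilbert-Schmidt ball. Note that both proofs share the same conditional hypothesis you correctly flag as the main obstacle: the identification of $\rhoML{\M}$ with the $\mathcal{I}$-metric projection of $\rhoML{\M'}$ is only asymptotic (the paper itself establishes it only in the subsequent subsubsection, via LAN of $\M'$), so your proof is no less rigorous than the paper's on this point.
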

\begin{proof}
Using the law of total probability, write $\mathrm{Pr}(\rhoML{\M} \in C_{N})$ as a sum of two terms, depending on whether $\rhoML{\M'} \in B_{N}$. Letting $p$ denote $\mathrm{Pr}(\rhoML{\M'} \in B_{N})$, we have
\begin{align*}
\mathrm{Pr}(\rhoML{\M} \in C_{N}) &= p\mathrm{Pr}(\rhoML{\M} \in C_{N} | \rhoML{\M'} \in B_{N})\\
&+(1-p)\mathrm{Pr}(\rhoML{\M} \in C_{N} | \rhoML{\M'} \not\in B_{N})\\
& \geq p\mathrm{Pr}(\rhoML{\M} \in C_{N} | \rhoML{\M'} \in B_{N}).
\end{align*}
For any $\rhoML{\M'} \in B_{N}$, the corresponding $\rhoML{\M}$ is somewhere in $\M$. To show $\rhoML{\M} \in C_{N}$, we use a proof by contradiction. Suppose that $\rhoML{\M}$ is the ML estimate in $\M$ for $\rhoML{\M'}$, and that $\rhoML{\M} \not \in C_{N}$. Let $\rho_{C}$ denote the closest point in $C_{N}$ to $\rhoML{\M}$. Because $B_{N} \supset C_{N}$, it follows that $\rho_{C}$ is closer to $\rhoML{\M'}$ than $\rhoML{\M}$, contradicting the assumption $\rhoML{\M}$ was the ML estimate in $\M$ for $\rhoML{\M'}$. Therefore, if $\rhoML{\M'} \in B_{N}$, it must be the case that $\rhoML{\M} \in C_{N}$.

Consequently, $\mathrm{Pr}(\rhoML{\M} \in C_{N} | \rhoML{\M'} \in B_{N}) = 1$, implying $\mathrm{Pr}(\rhoML{\M} \in C_{N}) \geq \mathrm{Pr}(\rhoML{\M'} \in B_{N})$. Applying the squeeze theorem, plus Lemma \ref{eq:lem}, we conclude $\lim_{N\rightarrow \infty}\mathrm{Pr}(\rhoML{\M} \in C_{N}) = 1$.
\end{proof}

In the original coordinates, both $B_{N}$ and the distribution of $\rhoML{\M'}$ shrink with $N$, but $B_N$ shrinks more slowly.  Suppose, instead, that we switch to $N$-dependent coordinates that shrink with the distribution of $\rhoML{\M'}$.  In these coordinates, $\M$ and $\M'$ grow with $N$, and $B_{N}$ also grows (but more slowly).  This  \emph{homothetic transformation} of $\M$, $\M'$, and $B_{N}$ scales all of them up.  As $N \rightarrow \infty$, $B_{N} \rightarrow \mathbb{R}^{d'}$,  and the local state space is the \emph{solid tangent cone} of $\M$ at $\rho_0$.

\begin{mydef}[Homothetic Transformation] Given a convex set $C$, the homothetic transformation of $C$ with respect to any point $X \in C$, with homothety coefficient $h$, is the set $C_{h}$ defined by
\[C_{h} = \{X + hY~|~\forall~Y\in C, Y \neq X\}.\]
\end{mydef}

\begin{mydef}[Solid Tangent Cone] For each point $X$ in a convex set $C$, let $C_{h}$ be the homothetic transformation of $C$ with respect to $X$, with homothety coefficient $h$. Then, the solid tangent cone $T(X)$ is defined as the following limit:
\[T(X) = \lim_{h\rightarrow \infty} C_{h}.\]
\end{mydef}

Tangent cones are a general feature of convex sets; see \cite{Rockafellar1998}, Chapter 6, Section A for more information about them and their properties.

Let $C_{N} = B_{N} \cap \M$ in Hilbert-Schmidt coordinates. 
We show that, in an $N$-dependent coordinate system, $C_{N}$ converges to the solid tangent cone, and is the local state space.

\begin{lem}Consider the set $C_{N} = B_{N} \cap \M$ in Hilbert-Schmidt coordinates, and define $C'_{N} = \{N^{1/2}\rho~\forall~\rho \in C_{N}\}$. Then:
\begin{itemize}
\item [1)] $\lim_{N\rightarrow \infty}C'_{N}$ is the solid tangent cone at $\rho_{0}$.
\item [2)] $\lim_{N\rightarrow \infty}C'_{N}$ is the local state space.
\end{itemize}
\end{lem}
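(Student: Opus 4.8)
The plan is to reduce everything to the single algebraic identity
$C'_{N} = N^{1/2}(B_{N}\cap\M) = (N^{1/2}B_{N})\cap(N^{1/2}\M)$,
which holds because dilation about the origin (recall that the coordinate shift $\rho\to\rho-\rho_{0}$ has placed $\rho_{0}$ at the origin) commutes with intersection. Once the scaling is distributed across the intersection, the two factors can be controlled independently as $N\to\infty$, and the two parts of the lemma follow by (1) identifying the limiting set geometrically and (2) checking that limiting set against the definition of the local state space.

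For part 1, I would analyze the two factors separately. The set $N^{1/2}\M$ is precisely the homothetic transformation of $\M$ about $\rho_{0}$ with coefficient $h=N^{1/2}$; since $\M$ is convex and contains its base point, these dilations are nested increasing in the coefficient (for $h\le h'$ and $x\in\M$, $hx = (h/h')(h'x)+(1-h/h')\cdot 0$ is a convex combination of two points of $h'\M$, so $h\M\subset h'\M$), and taking $h=N^{1/2}\to\infty$ through the integers produces the same increasing union as the continuous limit $h\to\infty$. By the definition of the solid tangent cone this union equals $T(\rho_{0})$. Meanwhile $N^{1/2}B_{N}$ is a Hilbert--Schmidt ball centered at the origin of radius $N^{1/2}r = \sqrt{\lambda_{\max}(\Sigma)}\,N^{1/4}$, which likewise grows monotonically and whose union over $N$ is all of $\mathbb{R}^{d'}$. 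Because both families are nested increasing, the union commutes with the intersection, giving $\lim_{N\rightarrow\infty} C'_{N} = \big(\bigcup_{N} N^{1/2}\M\big)\cap\big(\bigcup_{N} N^{1/2}B_{N}\big) = T(\rho_{0})\cap\mathbb{R}^{d'} = T(\rho_{0})$.

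For part 2, I would simply verify that the sequence $\{C'_{N}\}$ satisfies the requirements in the definition of the local state space, so that its limit \emph{is} the local state space by definition. Convexity is inherited because $C_{N}=B_{N}\cap\M$ is an intersection of convex sets and dilation preserves convexity; the inclusion $C'_{N}\subset\mathcal{M}_{N}$ (the Fisher-rescaled model, which under the isotropy assumption is the dilation by $\propto N^{1/2}$) holds since $C_{N}\subset\M$; nestedness $C'_{N}\subset C'_{N+1}$ is exactly the monotonicity established in part 1, i.e.\ property (a); and the probability condition (b) is Lemma \ref{eq:rhoMLlem} transported to the rescaled coordinates: because $\rho\mapsto N^{1/2}\rho$ is a bijection, $\mathrm{Pr}(N^{1/2}\rhoML{\M}\in C'_{N}) = \mathrm{Pr}(\rhoML{\M}\in C_{N})\to 1$. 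Combining the two parts then identifies the local state space with the solid tangent cone.

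The main obstacle is making the set-theoretic limit rigorous rather than merely suggestive. The two delicate points are (i) confirming that the scaled ball, although it grows only like $N^{1/4}$, nevertheless exhausts $\mathbb{R}^{d'}$ and so does not truncate the cone in the limit, and (ii) justifying that $\lim(A_{N}\cap B_{N}) = (\lim A_{N})\cap(\lim B_{N})$, which fails for arbitrary sequences but is valid here precisely because both $N^{1/2}\M$ and $N^{1/2}B_{N}$ are nested increasing. I would therefore state explicitly the sense of convergence (monotone union, up to closure) and prove the distributivity of union over intersection for nested increasing families as a short preliminary step, after which the remaining arguments are routine bookkeeping.
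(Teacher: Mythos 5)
Your proof is correct and follows the same skeleton as the paper's proof --- homothety of the model about the origin yields the solid tangent cone (part 1), and Lemma \ref{eq:rhoMLlem} together with the $N^{1/2}$ rescaling verifies the definition of the local state space (part 2) --- but yours is appreciably more careful where the paper is terse. The paper simply declares $C'_{N}$ a homothetic transformation of $C_{N}$ and invokes the tangent-cone definition; strictly, that definition applies to homotheties of a \emph{fixed} convex set, whereas $C_{N} = B_{N}\cap\M$ itself varies with $N$. Your decomposition $C'_{N} = (N^{1/2}B_{N})\cap(N^{1/2}\M)$ repairs exactly this point: the scaled model is a nested increasing family whose union is $T(\rho_{0})$ (up to closure), the scaled balls of radius $\propto N^{1/4}$ exhaust $\mathbb{R}^{d'}$ and so cannot truncate the cone, and distributivity of monotone unions over the intersection justifies the interchange of limits that the paper leaves implicit. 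Likewise, in part 2 the paper asserts the conclusion after noting convexity, Lemma \ref{eq:rhoMLlem}, and that the rescaling renders $\mathcal{I}$ constant; you instead check each clause of the local-state-space definition (convexity, nesting, and the probability condition transported through the bijection $\rho\mapsto N^{1/2}\rho$), which is the right standard of rigor. One small correction: your inclusion $C'_{N}\subset\mathcal{M}_{N}$ does not require the isotropy assumption --- the paper's observation is only that the map $\rho\to N^{1/2}\rho$ removes the $N$-dependence of the Fisher information, leaving a constant (not necessarily isotropic) metric, so your bijection argument goes through for general $\mathcal{I}$.
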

\begin{proof}~\\
\begin{itemize}
\item [1)] By definition, $C'_{N}$ is a homothetic transformation of $C_{N}$, with homothety coefficient $N^{1/2}$. (The homothetic center is $\rho_{0}$; in these coordinates, it is 0.) By definition, $\lim_{N\rightarrow \infty}C'_{N}$ is the solid tangent cone at $\rho_{0}$.
\item [2)] The original set $C_{N}$ is a convex subset of $\M$, and from Lemma \ref{eq:rhoMLlem}, $\lim_{N\rightarrow \infty}\mathrm{Pr}(\rhoML{\M} \in C_{N}) = 1$.  Further, the coordinate system defined by the mapping $\rho \rightarrow N^{1/2}\rho$ turns the (previously $N$-dependent) Fisher information $\mathcal{I}$ into a constant. Thus, $\lim_{N\rightarrow \infty}C'_{N}$ is the local state space.
\end{itemize}
\end{proof}

Therefore, we have shown that, asymptotically, the local state space around $\rho_{0}$ \emph{is} the solid tangent cone $T(\rho_{0})$.
The geometry of $T(\rho_{0})$ depends strongly on $\rho_{0}$. If $\rho_{0}$ is rank-deficient within $\M$, then $T(\rho_{0})$ is the cone whose faces touch $\M$ at $\rho_{0}$. (See Figure \ref{fig:tangentcone} for a rebit example.) However, if $\rho_{0}$ is full-rank, $T(\rho_{0})$ is $\mathbb{R}^{d^{2}-1}$.

\subsubsection{MLE as Metric Projection}

As $N \rightarrow \infty$, all the $\rhoML{\M'}$ are contained within the ball $B_{N}$, and the local state space is the solid tangent cone. Because $\M'$ satisfies LAN, the likelihood function around each $\rhoML{\M'}$ is Gaussian, meaning the optimization problem defining $\rhoML{\M}$ is given by \begin{equation}
\label{eq:MP-LANmle2}
\rhoML{\M} = \underset{\rho \in T(\rho_{0})}{\text{argmin}}~\mathrm{Tr}[(\rho  -\rhoML{\M'})\mathcal{I}(\rho  -\rhoML{\M'})].
\end{equation}

That is, $\rhoML{\M}$ is the \emph{metric projection} of $\rhoML{\M'}$ onto the tangent cone. See Figure \ref{fig:tangentcone} for a rebit example. (Notice that if $\rhoML{\M'} \in T(\rho_{0})$, then $\rhoML{\M} = \rhoML{\M'}$.) What makes this  nontrivial is the replacement of the original state space $\M$, whose geometry can be arbitrarily complicated, with its tangent cone $T(\rho_{0})$. As shown in the next section, cones can be much simpler and tractable.

\begin{figure}
\includegraphics[width=.75\columnwidth]{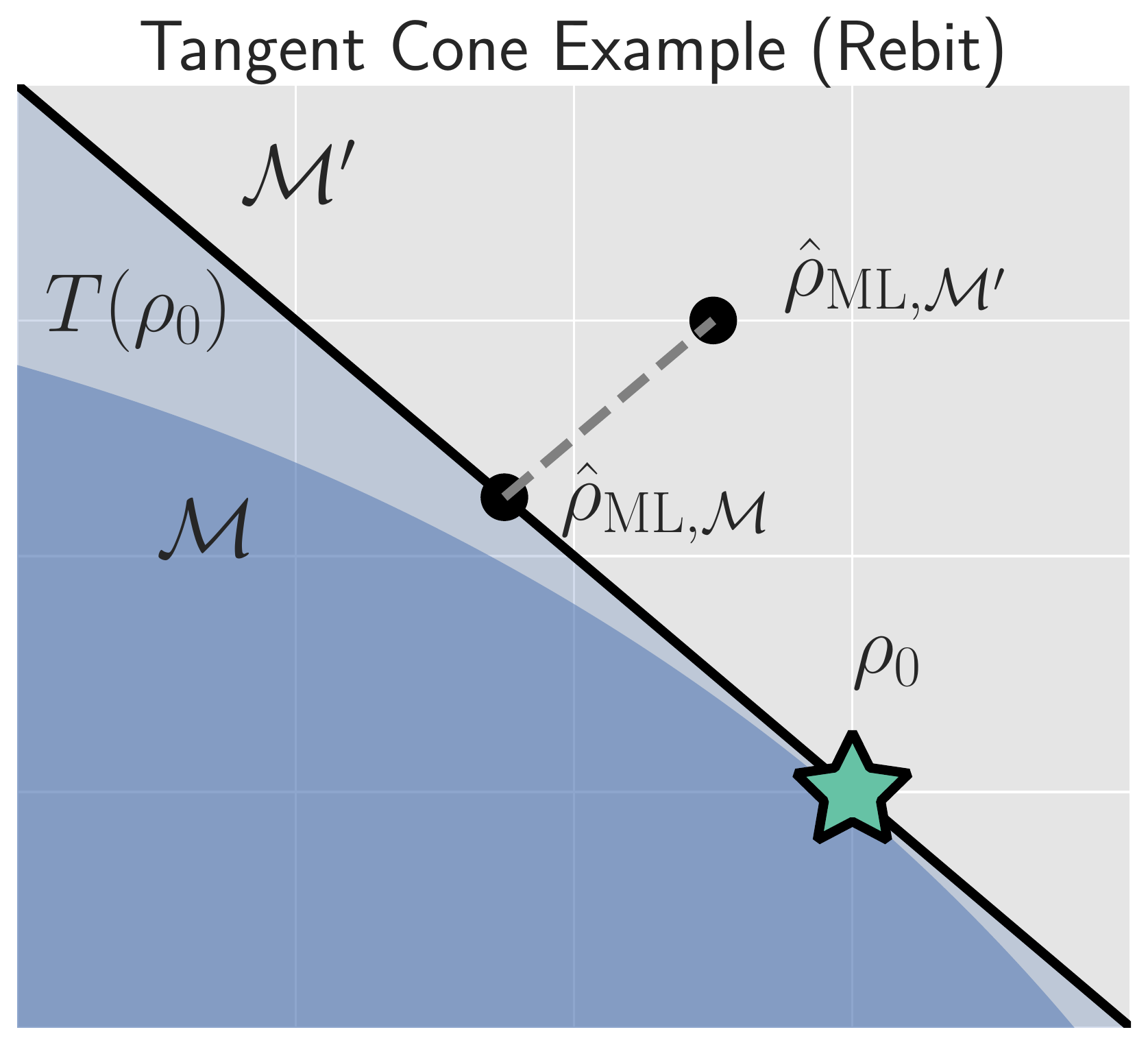}
 \caption{\textbf{Example of the solid tangent cone for a rebit:} As $N \rightarrow \infty$, the local state space around $\rho_{0}$ is $T(\rho_{0})$. In Fisher-adjusted coordinates, it's easy to show that (a) $\rhoML{\M}$ is the metric projection of $\rhoML{\M'}$ onto $T(\rho_{0})$, and (b) $\lambda(\rho_{0}, \M) = \mathrm{Tr}[(\rhoML{\M} - \rho_{0})^{2}]$.}
\label{fig:tangentcone}
\end{figure}

\subsubsection{Expression for $\lambda(\rho_{0}, \M)$}

The loglikelihood ratio statistic between any two models $\lambda(\M_{1}, \M_{2})$ can be computed using a \emph{reference model} $\mathcal{R}$: 
\[\lambda(\M_{1}, \M_{2}) = \lambda(\mathcal{R},\M_{2}) - \lambda(\mathcal{R},\M_{1}),\]
where
\[\lambda(\mathcal{R}, \M) = -2 \log \left(\frac{\cL(\mathcal{R})}{\cL(\M)}\right) =  -2 \log \left(\frac{\underset{\rho \in \mathcal{R}}{\max}~\cL(\rho)}{\underset{\rho \in \M}{\max}~\cL(\rho)}\right).\]
Let us take $\mathcal{R} = \rho_{0}$. Because $\M'$ satisfies LAN, asymptotically $\mathcal{L}(\rho)$ is Gaussian, and $\lambda$ relates to a difference in squared distances:
\begin{align}
\label{eq:lambdalan}
\nonumber \lambda(\rho_{0}, \M)&= -2 \log \left(\frac{\cL(\rho_{0})}{\underset{\rho \in \M}{\max}~\cL(\rho)}\right)\\
\nonumber &\xrightarrow[N \rightarrow \infty]{}~\mathrm{Tr}[(\rho_{0} - \rhoML{\M'})\mathcal{I}(\rho_{0} - \rhoML{\M'})]\\
&-  \mathrm{Tr}[(\rhoML{\M} - \rhoML{\M'})\mathcal{I}(\rhoML{\M} - \rhoML{\M'})].
\end{align}
Using the fact $\rhoML{\M}$ is a metric projection, we can prove that $\lambda(\rho_{0}, \M)$ has a simple form.

\begin{lem}
$\lambda(\rho_{0}, \M) = \mathrm{Tr}[(\rho_{0} - \rhoML{\M})\mathcal{I}(\rho_{0} - \rhoML{\M})]$.
\end{lem}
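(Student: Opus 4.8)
The plan is to begin from the asymptotic expression already obtained in Equation~\eqref{eq:lambdalan}, which writes $\lambda(\rho_{0},\M)$ as a \emph{difference} of two Fisher-weighted squared distances, and to collapse it into the single squared distance claimed. It is convenient to write $\langle A,B\rangle \equiv \mathrm{Tr}[A\,\mathcal{I}\,B]$ for the inner product induced by the (positive-definite) Fisher information, with norm $\|A\|^{2}=\mathrm{Tr}[A\,\mathcal{I}\,A]$. In this notation Equation~\eqref{eq:lambdalan} reads $\lambda(\rho_{0},\M)=\|\rho_{0}-\rhoML{\M'}\|^{2}-\|\rhoML{\M}-\rhoML{\M'}\|^{2}$, and the goal is to show this equals $\|\rho_{0}-\rhoML{\M}\|^{2}$.

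Regarding $\rho_{0}$ as the apex of the solid tangent cone $T(\rho_{0})$, I would expand the first term by inserting $\rhoML{\M}$,
\[
\|\rho_{0}-\rhoML{\M'}\|^{2}=\|\rho_{0}-\rhoML{\M}\|^{2}+2\langle\rho_{0}-\rhoML{\M},\,\rhoML{\M}-\rhoML{\M'}\rangle+\|\rhoML{\M}-\rhoML{\M'}\|^{2}.
\]
Subtracting $\|\rhoML{\M}-\rhoML{\M'}\|^{2}$ from both sides, the lemma reduces to showing that the cross term vanishes, i.e.
\[
\langle\rho_{0}-\rhoML{\M},\,\rhoML{\M}-\rhoML{\M'}\rangle=0.
\]
This is exactly the statement that the edge from $\rho_{0}$ to the projection $\rhoML{\M}$ is $\mathcal{I}$-orthogonal to the residual $\rhoML{\M'}-\rhoML{\M}$ --- the right triangle drawn in Figure~\ref{fig:llrs_MP-LAN}.

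Establishing this orthogonality is the crux of the argument, and the step I expect to be the main obstacle, because it is where the \emph{cone} structure of the local state space is essential. By Equation~\eqref{eq:MP-LANmle2}, $\rhoML{\M}$ is the metric projection of $\rhoML{\M'}$ onto the convex set $T(\rho_{0})$, so the standard variational characterization of the projection gives $\langle\rhoML{\M'}-\rhoML{\M},\,y-\rhoML{\M}\rangle\le 0$ for every $y\in T(\rho_{0})$. Since $T(\rho_{0})$ is a cone with apex $\rho_{0}$, it is closed under scaling about $\rho_{0}$; in particular both $y=\rho_{0}$ and $y=2\rhoML{\M}-\rho_{0}$ (the points obtained by scaling the ray through $\rhoML{\M}$ by $0$ and by $2$) lie in $T(\rho_{0})$. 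The choice $y=\rho_{0}$ gives $\langle\rhoML{\M'}-\rhoML{\M},\,\rho_{0}-\rhoML{\M}\rangle\le 0$, while $y=2\rhoML{\M}-\rho_{0}$ gives the reverse inequality, forcing the inner product to vanish. This yields the orthogonality, and hence the Pythagorean identity, completing the proof. I would note that this is precisely why the result fails for a general convex $\M$: it relies on the tangent-cone faces being flat through the apex, so that scaling about $\rho_{0}$ remains inside the feasible set.
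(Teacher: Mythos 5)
Your proposal is correct and follows essentially the same route as the paper: both collapse the difference of squared distances in Equation~\eqref{eq:lambdalan} via a Pythagorean identity, with the cross term $\mathrm{Tr}[(\rho_{0}-\rhoML{\M})\mathcal{I}(\rhoML{\M}-\rhoML{\M'})]$ vanishing because $\rhoML{\M}$ is the metric projection of $\rhoML{\M'}$ onto the cone $T(\rho_{0})$ with apex $\rho_{0}$ (your Fisher-weighted inner product is just the paper's switch to Fisher-adjusted coordinates in different notation). The one substantive difference is an improvement in rigor: where the paper simply \emph{asserts} the perpendicularity geometrically and splits into the cases $\rhoML{\M'}\notin T(\rho_{0})$ and $\rhoML{\M'}\in T(\rho_{0})$, you derive the orthogonality from the standard variational inequality for projections onto convex sets combined with the cone's closure under scaling about its apex (testing $y=\rho_{0}$ and $y=2\rhoML{\M}-\rho_{0}$), which handles both of the paper's cases at once, remains valid when $\rhoML{\M}$ lands on a lower-dimensional face or at the apex itself where ``normal to $T(\rho_{0})$'' is imprecise, and makes explicit exactly why the result needs the cone structure rather than holding for a general convex $\M$.
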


\begin{proof}
We switch to Fisher-adjusted coordinates ($\rho \rightarrow \mathcal{I}^{1/2}\rho$), and in these coordinates $\mathcal{I}$ becomes $\Id$:
\begin{equation}
\label{eq:lambdalan2}
\lambda(\rho_{0}, \M) = \mathrm{Tr}[(\rho_{0} - \rhoML{\M'})^{2}]-  \mathrm{Tr}[(\rhoML{\M} - \rhoML{\M'})^{2}].
\end{equation}

To prove the lemma, we must consider two cases:

\emph{Case 1}: Assume $\rhoML{\M'} \not \in T(\rho_{0})$. Because $\rhoML{\M}$ is the metric projection of $\rhoML{\M'}$ onto $T(\rho_{0})$ (Equation \eqref{eq:MP-LANmle2}),  the line joining $\rhoML{\M'}$ and $\rhoML{\M}$ is normal to $T(\rho_{0})$ at $\rhoML{\M}$. Because $T(\rho_{0})$ contains $\rho_{0}$ (as its origin), it follows that the lines joining $\rho_{0}$ to $\rhoML{\M}$, and $\rhoML{\M}$ to $\rhoML{\M'}$, are perpendicular. (See Figure \ref{fig:tangentcone}.)

 By the Pythagorean theorem, we have
\[\mathrm{Tr}[(\rho_{0} -\rhoML{\M'})^{2}] =  \mathrm{Tr}[(\rho_{0} - \rhoML{\M})^{2}] + \mathrm{Tr}[(\rhoML{\M} - \rhoML{\M'})^{2}]\]
Subtracting $\mathrm{Tr}[(\rhoML{\M} - \rhoML{\M'})^{2}]$ from both sides, and comparing to Equation \eqref{eq:lambdalan2}, yields the lemma statement in Fisher-adjusted coordinates.

\emph{Case 2}: Assume $\rhoML{\M'} \in T(\rho_{0})$. Then, $\rhoML{\M}= \rhoML{\M'}$, and Equation \eqref{eq:lambdalan2} simplifies to the lemma statement in Fisher-adjusted coordinates.

Switching back from Fisher-adjusted coordinates, we have $\lambda(\rho_{0}, \M) = \mathrm{Tr}[(\rho_{0} - \rhoML{\M})\mathcal{I}(\rho_{0} - \rhoML{\M})]$.
\end{proof}

So if $\M$ satisfies MP-LAN then as $N\rightarrow \infty$ the loglikelihood ratio statistic becomes related to \emph{squared error/loss} (as measured by the Fisher information metric.) This result may be of independent interest in, for example, defining new information criteria, which attempt to balance goodness of fit (as measured by $\lambda$) against error/loss (generally, as measured by squared error).

With these technical results in hand, we can proceed to compute $\langle \lambda(\M_{d}, \M_{d+1})\rangle$ in the next section.

\section{A Wilks theorem for Quantum State Space}
\label{sec:computingllrs}

To derive a replacement for the Wilks theorem, we start by showing  the models $\M_{d}$ satisfy MP-LAN.
\begin{lem}
\label{lem:qlan}
The models $\M_{d}$, defined in Equation \eqref{eq:modelsd}, satisfy MP-LAN.
\end{lem}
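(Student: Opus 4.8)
The plan is to verify the two ingredients of MP-LAN directly: exhibit an unconstrained model $\M' \supseteq \M_{d}$ that satisfies LAN, and check that $\M_{d}$ is a convex subset of it. Following the remark after the MP-LAN definition, I would take $\M'$ to be the set of trace-one Hermitian operators on $\cH_{d}$,
\[
\M' = \{\rho \in \mathcal{B}(\cH_{d}) ~|~ \rho = \rho^{\dagger},~\Tr(\rho) = 1\},
\]
which is exactly $\M_{d}$ with the positivity constraint $\rho \geq 0$ removed. This is an affine space of real dimension $d^{2}-1$, the same dimension as $\M_{d}$, and crucially it has no boundary --- precisely the feature that should let it satisfy ordinary LAN.

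The convexity and inclusion requirements are the easy steps. Inclusion $\M_{d} \subseteq \M'$ is immediate, since every density matrix is trace-one and Hermitian. Convexity of $\M_{d}$ is the standard fact that a convex combination $t\rho_{1} + (1-t)\rho_{2}$ of density matrices is again trace-one (trace is linear) and positive semidefinite (a nonnegative combination of positive operators is positive); this was already noted for $\M_{\cH}$ earlier in the paper.

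The substantive step --- and the one I expect to be the main obstacle --- is showing that $\M'$ satisfies LAN. My approach would be to recognize $\M'$ as a regular parametric family inside the multinomial model generated by the POVM. Parametrizing $\rho \in \M'$ by its $d^{2}-1$ Hilbert-Schmidt coordinates, the Born-rule probabilities $p_{j}(\rho) = \Tr(\rho E_{j})$ are affine --- hence smooth --- functions of those coordinates. Near a fixed $\rho_{0}$ with $\Tr(\rho_{0} E_{j}) > 0$ for every observed outcome (the generic situation, which holds even when $\rho_{0}$ is rank-deficient, provided no $E_{j}$ is supported orthogonally to the range of $\rho_{0}$), the loglikelihood is smooth on a neighborhood, so the classical regularity conditions of LeCam's theory can be checked. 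I would then invoke the standard result that such regular families are locally asymptotically normal, yielding the Gaussian limit $\mathrm{Pr}(\rhoML{\M'}) \xrightarrow{\text{d}} \mathcal{N}(\rho_{0}, \mathcal{I}^{-1}/N)$ used throughout the previous subsection.

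Two caveats deserve care. First, LAN in the nondegenerate form requires the Fisher information $\mathcal{I}$ to be invertible, which amounts to the map $\rho \mapsto (p_{j}(\rho))$ being injective on $\M'$; this is exactly the assumption that the POVM $\{E_{j}\}$ is informationally complete, and I would state it explicitly. Second, $\M'$ contains nonphysical operators for which some $p_{j}(\rho)$ is negative and the likelihood is ill-defined; this is harmless because LAN is an entirely local statement, and by Lemma \ref{eq:lem} the estimator $\rhoML{\M'}$ concentrates in an arbitrarily small neighborhood of $\rho_{0}$ on which all $p_{j} > 0$. Once LAN for $\M'$ is established, the definition of MP-LAN is met and the lemma follows.
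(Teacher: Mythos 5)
Your proposal takes essentially the same approach as the paper: the paper likewise chooses $\M'_{d}$ to be the set of trace-one $d\times d$ Hermitian matrices, notes $\M_{d}\subset\M'_{d}$, and concludes LAN for $\M'_{d}$ from the smoothness of the likelihood. The paper's proof is in fact terser --- it simply asserts that $\cL(\sigma)$ is twice continuously differentiable on $\M'_{d}$ and that this implies LAN --- so your explicit caveats (informational completeness for an invertible Fisher information, and locality of LAN to handle the region of $\M'_{d}$ where some Born probabilities are negative and $\log\cL$ is undefined) only make rigorous what the paper leaves implicit, and are correct.
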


\begin{proof} Let $\M'_{d} = \{\sigma ~|~\mathrm{dim}(\sigma) = d, \sigma = \sigma^{\dagger}, \mathrm{Tr}(\sigma)=1\}$. (That is, $\M'_{d}$ is the set of all trace-1, $d \times d$ Hermitian matrices, but we do not require them to be non-negative.) It is clear $\M_{d} \subset \M'_{d}$. Now, $\forall ~\sigma \in \M'_{d}$, the likelihood $\cL(\sigma)$ is twice continuously differentiable, meaning $\M'_{d}$ satisfies LAN. Thus, $\M_{d}$ satisfies MP-LAN.
\end{proof}

We can reduce the problem of computing $\lambda(\M_{d}, \M_{d+1})$ to that of computing $\lambda(\rho_{0}, \M_{k})$ for $k = d, d+1$ using the identity
\[\lambda(\M_{d}, M_{d+1}) = \lambda(\rho_{0}, \M_{d+1}) - \lambda(\rho_{0}, \M_{d}).\]
where $\lambda(\rho_{0}, \M_{k})$ is given in Equation \eqref{eq:llrs_lan_2}.
Because each model satisfies MP-LAN, asymptotically, $\lambda(\rho_{0}, \M_{k})$ takes a very simple form, via Equation \eqref{eq:llrs_lan}:
\[\lambda(\rho_{0}, \M_{k}) = \mathrm{Tr}[(\rho_{0} - \rhoML{\M_{k}})\mathcal{I}_{k}(\rho_{0} - \rhoML{\M_{k}})].\]
The Fisher information $\mathcal{I}_{k}$ is generally anisotropic,  depending on $\rho_{0}$, the POVM being measured, and the model $\M_{k}$ (see Figure \ref{fig:anisofi}). And while the $\rho\geq0$ constraint  that invalidated LAN in the first place is at least somewhat tractable in standard (Hilbert-Schmidt) coordinates, it becomes completely intractable in Fisher-adjusted coordinates.  So, to obtain a semi-analytic null theory for $\lambda$, we will simplify to the case where   $\mathcal{I}_{k} = \Id_{k}/\epsilon^{2} $ for some $\epsilon$ that scales as $1/\sqrt{N_{\mathrm{samples}}}$. (That is, $\mathcal{I}_{k}$ is proportional to the Hilbert-Schmidt metric.) This simplification permits the derivation of analytic results that capture realistic tomographic scenarios surprisingly well \cite{Smolin2012}.

\begin{figure}
\centering
\includegraphics[width=.75\columnwidth]{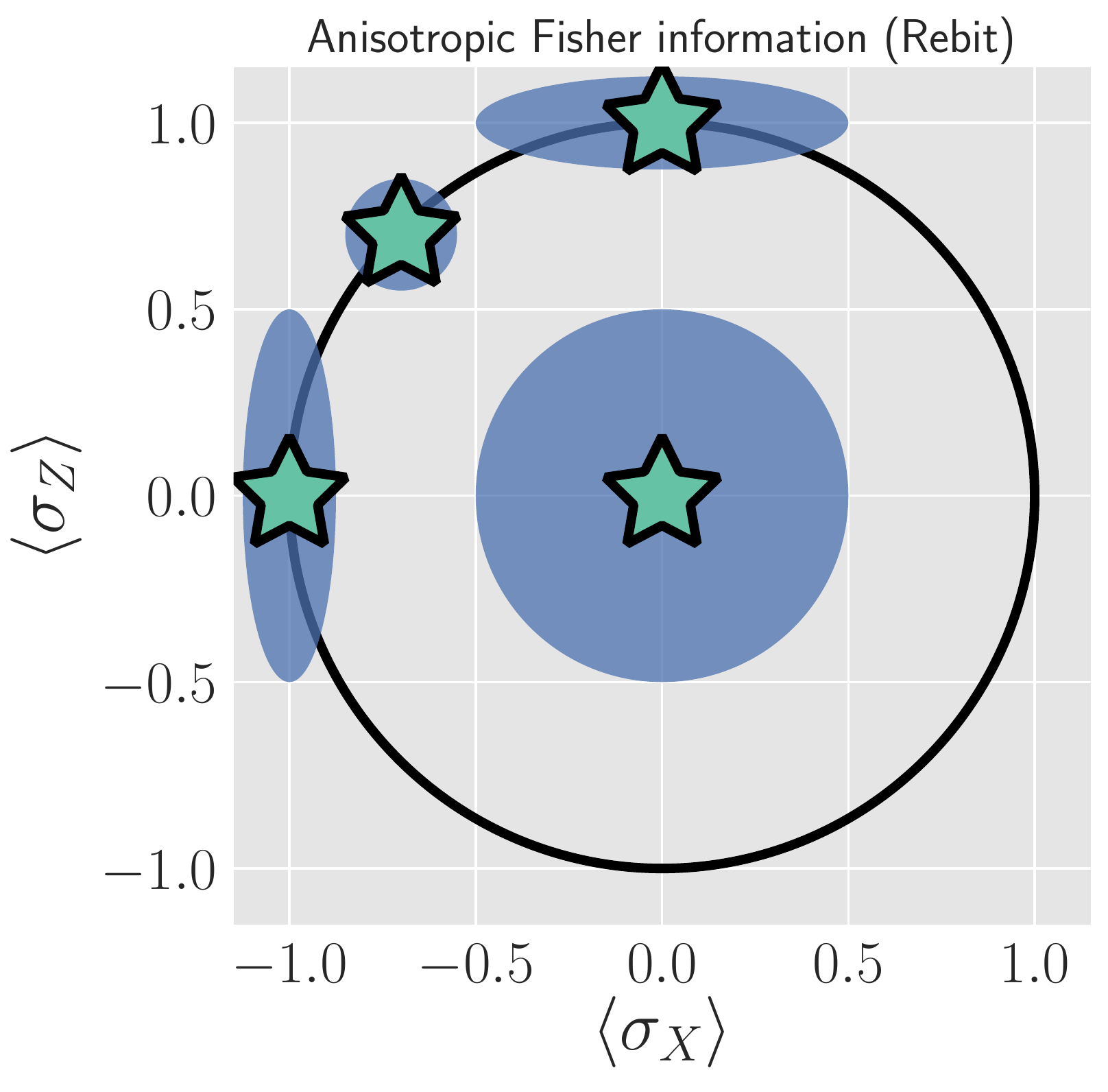}
 \caption{\textbf{Anisotropy of the Fisher information for a rebit:} Suppose a rebit state $\rho_{0}$ (star) is measured using the POVM $\frac{1}{2}\{|0\rangle\langle 0|, |1\rangle\langle 1|, |+\rangle\langle +|, |-\rangle\langle -|\}$. Depending on $\rho_{0}$, the distribution of the unconstrained estimates $\hat{\rho}_{\mathrm{ML}}$ (ellipses) may be anisotropic. Imposing the positivity constraint $\rho \geq 0$ is difficult in Fisher-adjusted coordinates; in this paper, we simplify these complexities to the case where $\mathcal{I} \propto \Id$, \emph{and is independent of $\rho_{0}$}.}
\label{fig:anisofi}
\end{figure}

With this simplification, $\lambda(\M_{d},\M_{d+1})$ is given by
\begin{equation}
\label{eq:llrs_split}
\lambda = \frac{1}{\epsilon^{2}}\left(\mathrm{Tr}[(\rho_{0} - \rhoML{d+1})^{2}] -  \mathrm{Tr}[(\rho_{0} - \rhoML{d})^{2}]\right).
\end{equation}
That is, $\lambda$ is a \emph{difference} in Hilbert-Schmidt distances. This expression makes it clear why a null theory for $\lambda$ is necessary: if $\rho_{0} \in \M_{d},\M_{d+1}$, then $\rhoML{d+1}$ will lie further from $\rho_{0}$ than $\rhoML{d}$ (because there are more parameters that can fit noise in the data). The null theory for $\lambda$ tells us how much extra error will be incurred in using $\M_{d+1}$ to reconstruct $\rho_{0}$ when $\M_{d}$ is just as good.

Describing $\mathrm{Pr}(\lambda)$ is difficult because the distributions of $\rhoML{d}, \rhoML{d+1}$ are complicated, highly non-Gaussian, and singular (estimates ``pile up'' on the various faces of the boundary as shown in Figure \ref{fig:boundaries}).  For this reason, we will not attempt to compute $\mathrm{Pr}(\lambda)$ directly.  Instead, we focus on deriving a good approximation for $\langle \lambda \rangle$.

We consider each of the terms in Equation \eqref{eq:llrs_split} separately and focus on computing $\epsilon^{2}\langle \lambda(\rho_{0}, \M_{d}) \rangle = \langle \mathrm{Tr}[(\rhoML{d}  - \rho_{0})^{2}] \rangle$ for arbitrary $d$.
Doing so involves two main steps:
\begin{itemize}
\item[(1)] Identify which degrees of freedom in $\rhoML{\M'_{d}}$ are, and are not, affected by projection onto the tangent cone $T(\rho_{0})$.
\item[(2)] For each of those categories, evaluate its contribution to the value of $\langle \lambda \rangle$.
\end{itemize}

In Section \ref{subsec:dof}, we identify two types of degrees of freedom in $\rhoML{\M'}$, which we call the ``L" and the ``kite". Section \ref{subsec:L} computes the contribution of degrees of freedom in the ``L", and Section \ref{subsec:kite} computes the contribution from the ``kite". The total expected value is given in Equation \eqref{eq:ourLLRS} in Section \ref{subsec:LLRS}, on page \pageref{eq:ourLLRS}.

\subsection{Separating out Degrees of Freedom in $\rhoML{\M'_{d}}$}
\label{subsec:dof}
We begin by observing that $\lambda(\rho_{0}, \M_{d})$ can be written as a sum over matrix elements,
\begin{align}
\nonumber \lambda &=\epsilon^{-2}\mathrm{Tr}[(\rhoML{d} - \rho_{0})^{2}] = \epsilon^{-2}\sum_{jk}|(\rhoML{d}- \rho_{0} )_{jk}|^{2}\\
\nonumber &= \sum_{jk}\lambda_{jk}~~~~\text{where}~~~~\lambda_{jk} = \epsilon^{-2}|(\rhoML{d} - \rho_{0} )_{jk} |^{2},
\end{align}
and therefore $\langle \lambda \rangle = \sum_{jk}\langle\lambda_{jk}\rangle$.  Each term $\langle \lambda_{jk}\rangle$ quantifies the mean-squared error of a single matrix element of $\rhoML{d}$, and while the Wilks theorem predicts $\expect{\lambda_{jk}}=1$ for all $j,k$, due to positivity constraints, this no longer holds. In particular, the matrix elements of $\rhoML{d}$ now fall into two parts:

\begin{enumerate}[noitemsep]
\item Those for which the positivity constraint \emph{does affect} their behavior.
\item Those for which the positivity constraint \emph{does not affect} their behavior, as they correspond to directions on the surface of the tangent cone $T(\rho_{0})$. (Recall Figure \ref{fig:tangentcone} - as a component of $\rhoML{\M'}$ along $T(\rho_{0})$ changes, the component of $\rhoML{\M}$ changes by the same amount. These elements are unconstrained.)
\end{enumerate}
The latter, which lie in what we call the ``L", comprise all off-diagonal elements on the support of $\rho_0$ \emph{and} between the support and the kernel, while the former, which lie in what we call the ``kite", are all diagonal elements \emph{and} all elements on the kernel (null space) of $\rho_0$.

Performing this division is also supported by numerical simulations (see Figure \ref{fig:L}). Matrix elements in the ``L" appear to contribute $\langle \lambda_{jk}\rangle = 1$, consistent with the Wilks theorem, while those in the ``kite" contribute more (if they are within the support of $\rho_{0}$) or less (if they are in the kernel).  Having performed the division of the matrix elements of $\rhoML{\M'_{d}}$, we observe that $\langle\lambda\rangle = \expect{\lambda_{\mathrm{L}}} + \expect{\lambda_{\mathrm{kite}}}$. Because each $\langle \lambda_{jk}\rangle$ is not necessarily equal to one (as in the Wilks theorem), and because many of them are less than 1, it is clear that their total $\langle \lambda \rangle$ is dramatically lower than the prediction of the Wilks theorem. (Recall Figure \ref{fig:boundaries2}.)

\begin{figure}
\includegraphics[width=\columnwidth]{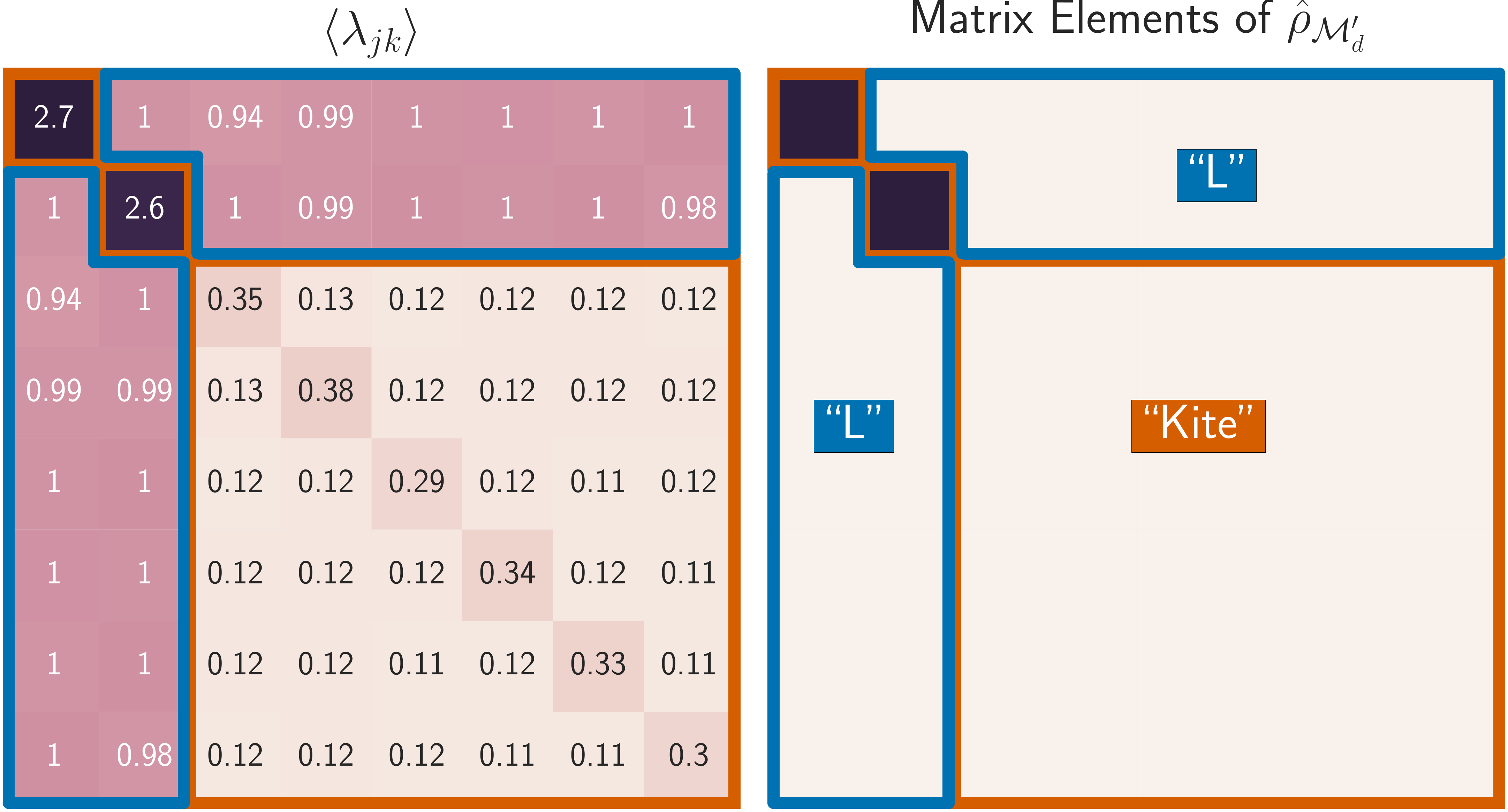}
 \caption{\textbf{Division of the matrix elements of $\rhoML{\M'_{d}}$:} When a rank-2 state is reconstructed in $d=8$ dimensions, the total loglikelihood ratio $\lambda(\rho_0,\mathcal{M}_8)$ is the sum of terms $\lambda_{jk}$ from errors in each matrix element $(\rhoML{d})_{jk}$.  \textbf{Left}:  Numerics show a clear division; some matrix elements have $\expect{\lambda_{jk}}\sim1$ as predicted by the Wilks theorem, while others are either more or less. \textbf{Right}:  The numerical results support our theoretical reasoning for dividing the matrix elements of $\rhoML{\M'_{d}}$ into two parts: the ``kite'' and the ``L''.}
\label{fig:L}
\end{figure}

In the following subsections, we develop a theory to explain the behavior of $\langle \lambda_{\mathrm{L}}\rangle$ and \expect{\lambda_{\mathrm{kite}}}.
In doing so, it is helpful to think about the matrix $\delta \equiv \rhoML{\M'_{d}}- \rho_{0}$, a normally-distributed \emph{traceless} matrix.  To simplify the analysis, we explicitly drop the $\Tr(\delta)=0$ constraint and let $\delta$ be $\mathcal{N}(0,\epsilon^2\Id)$ distributed over the $d^2$-dimensional space of Hermitian matrices (a good approximation when $d\gg2$), which makes $\delta$ proportional to an element of the Gaussian Unitary Ensemble (GUE) \cite{Fyodorov2005}.

\subsection{Computing $\langle \lambda_\mathrm{L}\rangle$}
\label{subsec:L}
The value of each $\delta_{jk}$ in the ``L" is invariant under projection onto the boundary (the surface of the tangent cone $T(\rho_{0})$), meaning that it is also equal to the error $(\rhoML{d} - \rho_{0})_{jk}$. Therefore,  $\langle \lambda_{jk} \rangle = \langle \delta_{jk}^{2}\rangle /\epsilon^{2}$. Because $\M'$ satisfies LAN, it follows that each $\delta_{jk}$ is an i.i.d. Gaussian random variable with mean zero and variance $\epsilon^{2}$. Thus, $\langle \lambda_{jk}\rangle = 1~\forall~(j,k)$ in the ``L". The dimension of the surface of the tangent cone is equal to the dimension of the manifold of rank-$r$ states in a $d$-dimensional space. A direct calculation of that quantity yields $2rd - r(r+1)$, so $\expect{\lambda_{\mathrm{L}}} = 2rd - r(r+1)$.

Another way of obtaining this result is to view the $\delta_{jk}$ in the ``L'' as errors arising due to small unitary perturbations of $\rho_{0}$. Writing $\rhoML{\M'_{d}} = U^{\dagger}\rho_{0}U$, where $U=e^{i\epsilon H}$, we have
\[\rhoML{\M'_{d}} \approx \rho_{0} + i\epsilon [\rho_{0},H]+\mathcal{O}(\epsilon^{2}),\]
and $\delta \approx i\epsilon [\rho_{0},H]$.
If $j = k$, then $\delta_{jj} = 0$. Thus, small unitaries cannot create errors in the diagonal matrix elements, at $\mathcal{O}(\epsilon)$. If $j \neq k$, then $\delta_{jk} \neq 0$, in general. (Small unitaries \emph{can} introduce errors on off-diagonal elements.)

However, if either $j$ or $k$ (or both) lie within the \emph{kernel} of $\rho_{0}$ (i.e., $\langle k | \rho_{0}| k \rangle$ or $\langle j|\rho_{0}|j\rangle$ is 0), then the corresponding $\delta_{jk}$ are zero. The only off-diagonal elements where small unitaries can introduce errors are those which are coherent between the kernel of $\rho_{0}$ and its support. These off-diagonal elements are precisely the ``L", and are  the set $\{\delta_{jk}~|~\langle j | \rho_{0}|j\rangle \neq 0, j\neq k, ~ 0 \leq j,k \leq d - 1\}$. This set contains $2rd - r(r+1)$ elements, each of which has $\langle \lambda_{jk}\rangle = 1$, so we again arrive at $\expect{\lambda_{\mathrm{L}}} = 2rd - r(r+1)$.

\subsection{Computing $\langle \lambda_\mathrm{kite}\rangle$}
\label{subsec:kite}
Computing $\langle \lambda_{\mathrm{L}}\rangle$ was made easy by the fact that the matrix elements of $\delta$ in the ``L" are invariant under the projection of $\rhoML{\M'_{d}}$ onto $T(\rho_{0})$. Computing $\expect{\lambda_{\mathrm{kite}}}$ is a bit harder, because the boundary \emph{does} constrain $\delta$. To understand how the behavior of $\expect{\lambda_{\mathrm{kite}}}$ is affected, we analyze an algorithm presented in \cite{Smolin2012} for explicitly solving the optimization problem in Equation \eqref{eq:MP-LANmle}.

This algorithm, a (very fast) numerical method for computing $\rhoML{d}$ given $\rhoML{\M'_{d}}$, utilizes two steps:
\begin{enumerate}[noitemsep]
\item Subtract $q\Id$ from $\rhoML{\M'_{d}}$, for a particular $q \in \mathbb{R}$.
\item ``Truncate'' $\rhoML{\M'_{d}}-q\Id$, by replacing each of its negative eigenvalues with zero.
\end{enumerate}
Here, $q$ is defined implicitly such that $\Tr\left[ \mathrm{Trunc}(\rhoML{\M'_{d}}-q\Id)\right] = 1$, and must be determined numerically.
However, we can analyze how this algorithm affects the eigenvalues of $\rhoML{d}$, which turn out to be the key quantity necessary for computing $\expect{\lambda_{\mathrm{kite}}}$.

The truncation algorithm above is most naturally performed in the eigenbasis of $\rhoML{\M'_{d}}$.  Exact diagonalization of $\rhoML{\M'_{d}}$ is not feasible analytically, but only its \emph{small} eigenvalues are critical in truncation. Further, only knowledge of the \emph{typical} eigenvalues of $\rhoML{d}$ is necessary for computing $\langle \lambda_{\mathrm{kite}}\rangle$. Therefore, we do not need to determine $\rhoML{d}$ exactly, which would require explicitly solving Equation \eqref{eq:MP-LANmle} using the algorithm presented in \cite{Smolin2012}; instead, we need a procedure for determining its typical eigenvalues.

We assume that $N_{\mathrm{samples}}$ is sufficiently large so that all the nonzero eigenvalues of $\rho_0$ are much larger than $\epsilon$. This means the eigenbasis of $\rhoML{\M'_{d}}$ is accurately approximated by: (1) 
the eigenvectors of $\rho_0$ on its support; and (2) the eigenvectors of $\delta_{\mathrm{ker}} = \Pi_{\mathrm{ker}}\delta\Pi_{\mathrm{ker}} = \Pi_{\mathrm{ker}}\rhoML{\M'_{d}}\Pi_{\mathrm{ker}}$, where $\Pi_{\mathrm{ker}}
$ is the projector onto the kernel of $\rho_0$.

Changing to this basis diagonalizes the ``kite'' portion of $\delta$, and leaves all elements of the ``L'' unchanged (at $\mathcal{O}(\epsilon)$).  The diagonal elements fall into two categories:
\begin{enumerate}[noitemsep]
\item $r$ elements corresponding to the eigenvalues of $\rho_0$, which are given by $p_{j} = \rho_{jj} + \delta_{jj}$ where  $\rho_{jj}$ is the $j^{\mathrm{th}}$ eigenvalue of $\rho_{0}$, and $\delta_{jj} \sim \mathcal{N}(0,\epsilon^2)$.
\item $n \equiv d-r$ elements that are eigenvalues of $\delta_{\mathrm{ker}}$, which we denote by $\bvec{\kappa} = \{\kappa_j:~j = 1,\ldots, 
n\}$.
\end{enumerate}
In turn, $q$ is the solution to
\begin{equation}
\label{eq:q_eqn}
 \sum_{j=1}^{r}(p_j - q)^{+} + \sum_{j=1}^{n}{(\kappa_j-q)^+} = 1,
\end{equation}
where $(x)^{+} = \max(x, 0)$, and $\lambda_{\mathrm{kite}}$ is
\begin{equation}
\label{eq:llrs_kite}
\epsilon^{2}\lambda_{\mathrm{kite}} = \sum_{j=1}^{r}[\rho_{jj}- (p_j-q)^{+}]^2 + \sum_{j=1}^{n}\left[(\kappa_j-q)^+\right]^2.
\end{equation}
 
To solve Equation \eqref{eq:q_eqn}, and derive an approximation for \eqref{eq:llrs_kite}, we use the fact that we are interested in computing the \emph{average value} of $\lambda_{\mathrm{kite}}$, which justifies approximating the random variable $q$ by a closed-form, deterministic value. To do so, we need to understand the behavior of $\bvec{\kappa}$. Developing such an understanding, and a theory of its \emph{typical value}, is the subject of the next section.

\subsubsection{Approximating the eigenvalues of a GUE($n$) matrix}
We first observe that while the $\kappa_j$ are random variables, they are \emph{not} normally distributed.  Instead, because $\delta_{\mathrm{ker}}$ is proportional to a $\mathrm{GUE}(n)$ matrix, for $n\gg1$, the distribution of any eigenvalue $\kappa_{j}$
converges to a Wigner semicircle distribution \cite{Wigner1958}, given by $\mathrm{Pr}(\kappa) = \frac{2}{\pi R^{2}}\sqrt{R^{2}-\kappa^{2}}$ for $|\kappa| \leq R$, with $R = 2\epsilon\sqrt{n}$.  The eigenvalues are not independent; they tend to avoid collisions (``level avoidance'' \cite{Tao2013}), 
and typically form a surprisingly regular array over the support of the Wigner semicircle.  Since our goal is to compute $\expect{\lambda_{\mathrm{kite}}}$, we can capitalize on this behavior by replacing each random sample of $\bvec{\kappa}$ with a 
\emph{typical sample} given by its order statistics $\bar{\bvec{\kappa}}$.  These are the average values of the \emph{sorted} 
$\bvec{\kappa}$, so $\overline{\kappa}_j$ is the average value of the $j^{\mathrm{th}}$ largest value of $\bvec{\kappa}$.  Large random samples 
are usually well approximated (for many purposes) by their order statistics even when the elements of the sample are 
independent, and level avoidance makes the approximation even better. 

\begin{figure}
\includegraphics[width=\columnwidth]{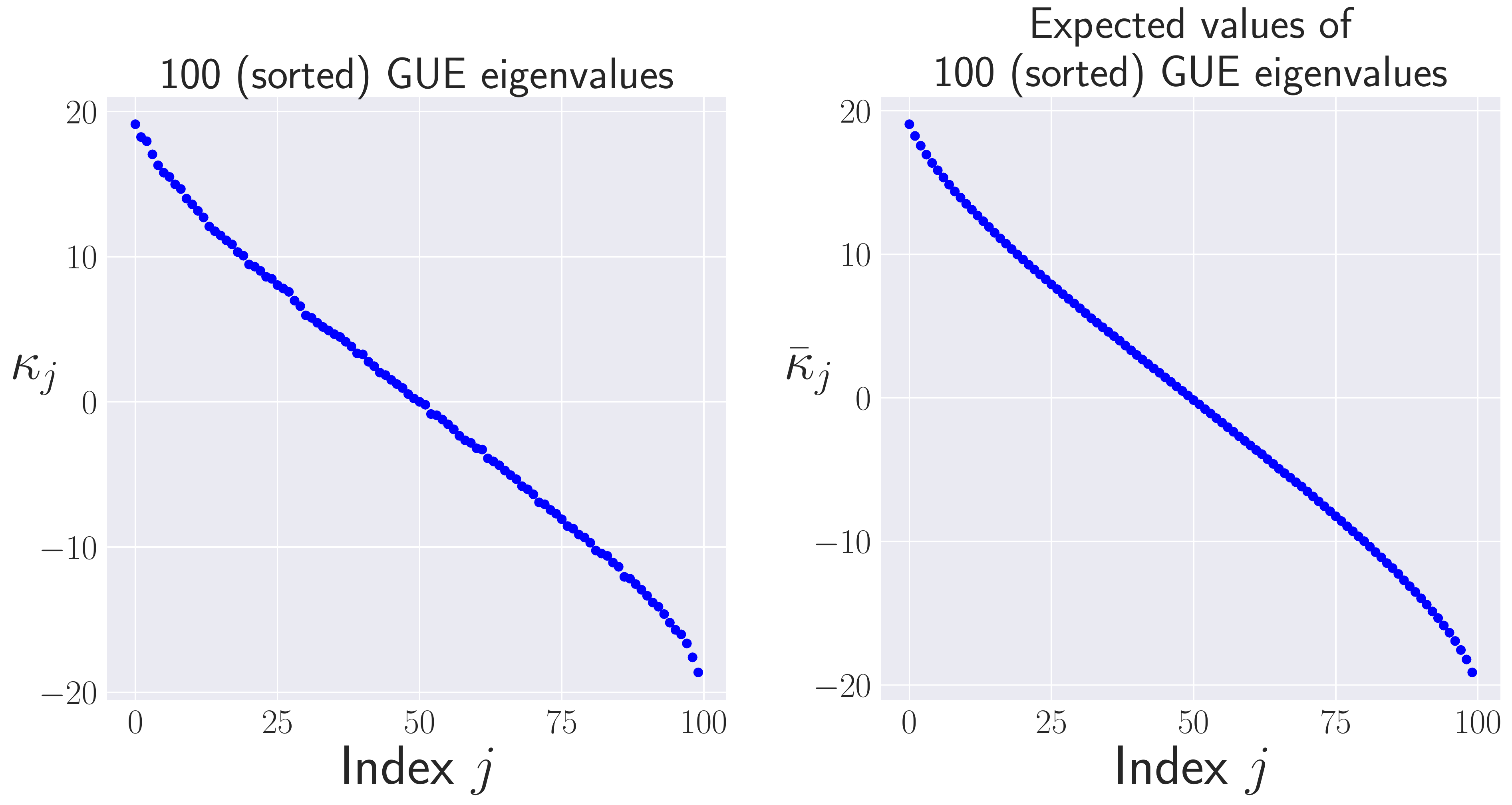}
\caption{\textbf{Approximating typical samples of GUE($n$) eigenvalues by order statistics:} We approximate a typical sample of GUE($n$) eigenvalues by their order statistics (average values of a sorted sample).  \textbf{Left}:  The sorted eigenvalues (i.e., order statistics $\kappa_{j}$) of one randomly chosen GUE(100) matrix.  \textbf{Right}:  Approximate expected values of the order statistics, $\bar{\kappa}_{j}$, of the GUE(100) distribution, computed as the average of the sorted eigenvalues of 100 randomly chosen GUE(100) matrices.}
\label{fig:orderstatistics1}
\end{figure}

Suppose that $\bvec{\kappa}$ are the eigenvalues of a GUE($n$) matrix, sorted from highest to lowest.  Figure \ref{fig:orderstatistics1} illustrates such a sample for $n=100$.  It also shows the \emph{average} values of 100 such samples (all sorted).  These are the \emph{order statistics} $\overline{\bvec{\kappa}}$ of the distribution (more precisely, what is shown is a good \emph{estimate} of the order statistics; the actual order statistics would be given by the average over infinitely many samples).  As the figure shows, while the order statistics \emph{are} slightly more smoothly and predictably distributed than a single (sorted) sample, the two are remarkably similar.  A single sample $\bvec{\kappa}$ will fluctuate around the order statistics, but these fluctuations are relatively small, partly because the sample is large, and partly because the GUE eigenvalues experience level repulsion.  Thus, the ``typical'' behavior of a sample -- by which we mean the mean value of a statistic of the sample -- is well captured by the order statistics (which have no fluctuations at all).

We now turn to the problem of modeling $\bvec{\kappa}$ quantitatively.  We note up front that we are only going to be interested in certain properties of $\bvec{\kappa}$:  specifically, partial sums of all $\kappa_j$ greater or less than the threshold $q$, or partial sums of functions of the $\kappa_j$ (e.g., $(\kappa_j-q)^2$).  We require only that an ansatz be accurate for such quantities.  We do not use this fact explicitly, but it motivates our approach -- and we do not claim that our ansatz is accurate for \emph{all} conceivable functions.

In general, if a sample $\bvec{\kappa}$ of size $n$ is drawn so that each $\kappa$ has the same probability density 
function $\mathrm{Pr}(\kappa)$, then a good approximation for the $j^{\mathrm{th}}$ order statistic is given by the inverse 
\emph{cumulative} distribution function (CDF):
\begin{equation}
\overline{\kappa}_j \approx \mathrm{CDF}^{-1}\left(\frac{j-1/2}{n}\right).
\end{equation}
This is closely related to the observation that the histogram of a sample tends to look similar to the underlying probability density function.  More precisely, it is equivalent to the observation that the empirical distribution function (the CDF of the histogram) tends to be (even more) similar to the underlying CDF.  For i.i.d. samples, this is the content of the Glivenko-Cantelli theorem \cite{VanderVaart2000}.  Figure \ref{fig:orderstatistics2} compares the order statistics of GUE(100) and GUE(10) eigenvalues (computed as numerical averages over 100 random samples) to the inverse CDF for the Wigner semicircle distribution.  Even though the Wigner semicircle model of GUE eigenvalues is only exact as $n\to\infty$, it provides a nearly-perfect model for $\overline{\bvec{\kappa}}$ even at $n=10$ (and remains surprisingly good all the way down to $n=2$).

\begin{figure}[t]
\includegraphics[width=\columnwidth]{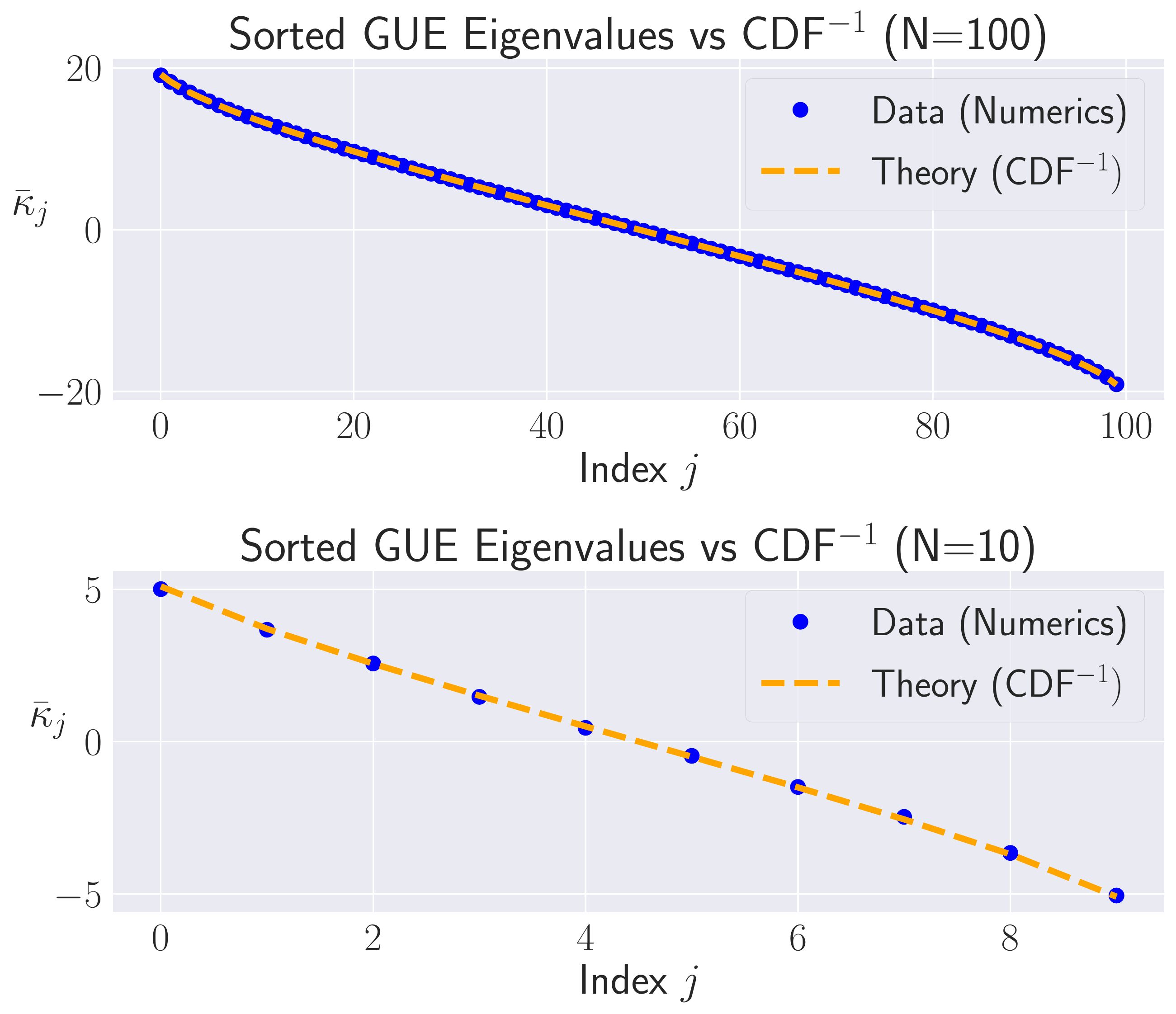}
\caption{\textbf{Approximating order statistics by the inverse CDF:} Order statistics of the GUE($n$) eigenvalue distribution are very well approximated by the inverse CDF of the Wigner semicircle distribution.  In both figures, we compare the order statistics of a GUE($n$) distribution to the inverse CDF of the Wigner semicircle distribution. \textbf{Top}:  $n=100$.  \textbf{Bottom}:  $n=10$.
Agreement in both cases is essentially perfect.}
\label{fig:orderstatistics2}
\end{figure}

We make one further approximation, by assuming that $n\gg1$, so the distribution of the $\overline{\kappa}_j$ is effectively continuous and identical to $\mathrm{Pr}(\kappa)$. For the quantities that we compute, this is equivalent to replacing the empirical distribution function (which is a step function) by the CDF of the Wigner semicircle distribution.  So, whereas for any given sample the partial sum of all $\kappa_j > q$ jumps discontinuously when $q=\kappa_j$ for any $j$, in this approximation it changes smoothly.  This accurately models the \emph{average} behavior of partial sums.

\subsubsection{Deriving an approximation for $q$}
The approximations of the previous section allow us to use $\{p_j\} \cup \{\overline{\kappa}_j\}$ as the ansatz for the eigenvalues of $\rhoML{\M'_{d}}$, where the $p_j$ are $\mathcal{N}(\rho_{jj},\epsilon^2)$ random variables, and the $\overline{\kappa}_j$ are the (fixed, smoothed) order statistics of a Wigner semicircle distribution.  In turn, the defining equation for $q$ (Equation \eqref{eq:q_eqn}) is well approximated as
\begin{equation}
\nonumber \sum_{j=1}^{r}(p_j - q)^{+} + \sum_{j=1}^{n}{(\overline{\kappa}_j-q)^+} = 1.
\end{equation}
To solve this equation, we observe that the $\overline{\kappa}_j$ are symmetrically distributed around $
\kappa=0$, so half of them are negative.  Therefore, with high probability, $\Tr
\left[\mathrm{Trunc}(\rhoML{\M'_{d}})\right]>1$, and so we will need to subtract $q\Id$ from $\rhoML{\M'_{d}}$ before truncating.

Because we have assumed $N_{\mathrm{samples}}$ is sufficiently large ($N_{\mathrm{samples}} >> \min_{j}1/\rho_{jj}^{2}$), the eigenvalues of $\rho_{0}$ are large compared to the perturbations $\delta_{jj}$ and $q$. This implies $(p_{j} - q)^{+} = p_{j} - q$. Under this assumption, $q$ is the solution to
\begin{align}
\nonumber \sum_{j=1}^{r}(p_j - q) + \sum_{j=1}^{n}{(\overline{\kappa}_j-q)^+} & = 1\\
\nonumber \implies - rq + \Delta + n\int_{\kappa=q}^{2\epsilon\sqrt{n}}{(\kappa-q)\mathrm{Pr}(\kappa)\mathrm{d}\kappa} & = 0\\
\nonumber \label{eq:q_eqn2}\implies - rq + \Delta + \frac{\epsilon}{12\pi}\left[(q^2+8n)\sqrt{-q^2+4n} \right.\\
\left. -12qn\left(\frac{\pi}{2}-\sin^{-1}\left(\frac{q}{2\sqrt{n}}\right)\right)   \right]&=0,
\end{align}
where $\Delta = \sum_{j=1}^{r}\delta_{jj}$ is a $\mathcal{N}(0,r\epsilon^2)$ random variable.  We choose to replace a discrete 
sum (line 1) with an integral (line 2). This approximation is valid when $n\gg1$, as we can accurately approximate a discrete collection of closely spaced real numbers by a smooth density or distribution over the real numbers that has approximately the same CDF.  It is also remarkably accurate in practice.
  
In yet another approximation, we replace $\Delta$ with its average value, which is zero.  We could obtain an even more accurate expression 
 by treating $\Delta$ more carefully, but this crude approximation turns out to be quite accurate already.

To solve Equation \eqref{eq:q_eqn2}, it is necessary to further simplify the complicated expression resulting from the integral (line 3).  To do so, we 
assume  $\rho_0$ is relatively low-rank, so $r \ll d/2$.  In this case, the sum of the positive $\overline{\kappa}_j$ is large compared 
with $r$, almost all of them need to be subtracted away, and therefore $q$ is close to $2\epsilon\sqrt{n}$.  We therefore replace 
the complicated expression with its leading order Taylor expansion around $q=2\epsilon\sqrt{n}$, substitute into Equation \eqref{eq:q_eqn2}, and 
obtain the equation
\begin{equation}
\frac{rq}{\epsilon}  = \frac{4}{15\pi}n^{1/4}\left(2\sqrt{n}-\frac{q}{\epsilon}\right)^{5/2}.
\end{equation}
This equation is a quintic polynomial in $q/\epsilon$, so by the Abel-Ruffini theorem, it has no algebraic solution.  However, as $n \rightarrow \infty$, its roots have a well-defined algebraic \emph{approximation} that becomes accurate quite rapidly (e.g., for $n>4$):
\begin{align}
\label{eq:truncation}
z &\equiv q/\epsilon \approx 2\sqrt{n}\left(1 -\frac{1}{2}x +\frac{1}{10}x^{2} - \frac{1}{200}x^{3}\right),
\end{align}
where $x = \left(\frac{15 \pi r}{2n}\right)^{2/5}$ \footnote{See supplemental information for a derivation of this solution.}.

\subsubsection{Expression for $\langle \lambda_{\mathrm{kite}}\rangle$}
Now that we know how much to subtract off in the truncation process, we can approximate $\expect{\lambda_{\mathrm{kite}}}$, originally given in Equation \eqref{eq:llrs_kite}:
\begin{align}
\nonumber \expect{\lambda_{\mathrm{kite}}} &\approx  \frac{1}{\epsilon^{2}}\left\langle\sum_{j=1}^{r}[\rho_{jj}- (p_j-q)^{+}]^2 + \sum_{j=1}^{n}\left[(\bar{\kappa}_j-q)^+\right]^2 \right\rangle\\
\nonumber &\approx \frac{1}{\epsilon^{2}} \left\langle\sum_{j=1}^{r}[-\delta_{jj} +  q ]^2 + \sum_{j=1}^{n}\left[(\bar{\kappa}_j-q)^+\right]^2 \right\rangle\\
\nonumber  &\approx r + rz^2 + \frac{n}{\epsilon^{2}}\int_{\kappa=q}^{2\epsilon\sqrt{n}}{ \mathrm{Pr}(\kappa)(\kappa-q)^2 d\kappa} \\
\nonumber &=r + rz^{2} + \frac{n(n+z^{2})}{\pi}\left(\frac{\pi}{2} - \sin^{-1}\left(\frac{z}{2\sqrt{n}}\right)\right) \\
& - \frac{z(z^{2}+26n)}{24\pi}\sqrt{4n-z^{2}}.
\end{align}

\subsection{Complete Expression for $\langle \lambda \rangle$}
\label{subsec:LLRS}
The total expected value, $\expect{\lambda} = \expect{\lambda_{\mathrm{L}}} + \expect{\lambda_{\mathrm{kite}}}$, is thus
\begin{align}
\label{eq:ourLLRS}
\nonumber \langle \lambda(\rho_{0}, \M_{d}) \rangle &\approx 2rd - r^{2}+rz^{2}\\
\nonumber & + \frac{n(n+z^{2})}{\pi}\left(\frac{\pi}{2} - \sin^{-1}\left(\frac{z}{2\sqrt{n}}\right)\right) \\
& - \frac{z(z^{2}+26n)}{24\pi}\sqrt{4n-z^{2}}.
\end{align}
where $z$ is given in Equation \eqref{eq:truncation}, $n=d-r$, and $r = \mathrm{Rank}(\rho_{0})$.

This null theory is much more complicated than the Wilks theorem, but as Figure \ref{fig:modelcomp-iso} shows, it is very accurate when $2r << d$. (In contrast, the prediction of the Wilks theorem is wildly incorrect for $r\ll d$.) Although our null theory does break down as $r \rightarrow d$, it does so fairly gracefully. We conclude that our analysis (and Equation \eqref{eq:ourLLRS}) correctly models tomography \emph{if} the Fisher information is isotropic ($\Fi \propto \Id$), a point we turn to in the following subsections.

\begin{figure}
 \includegraphics[width=\columnwidth]{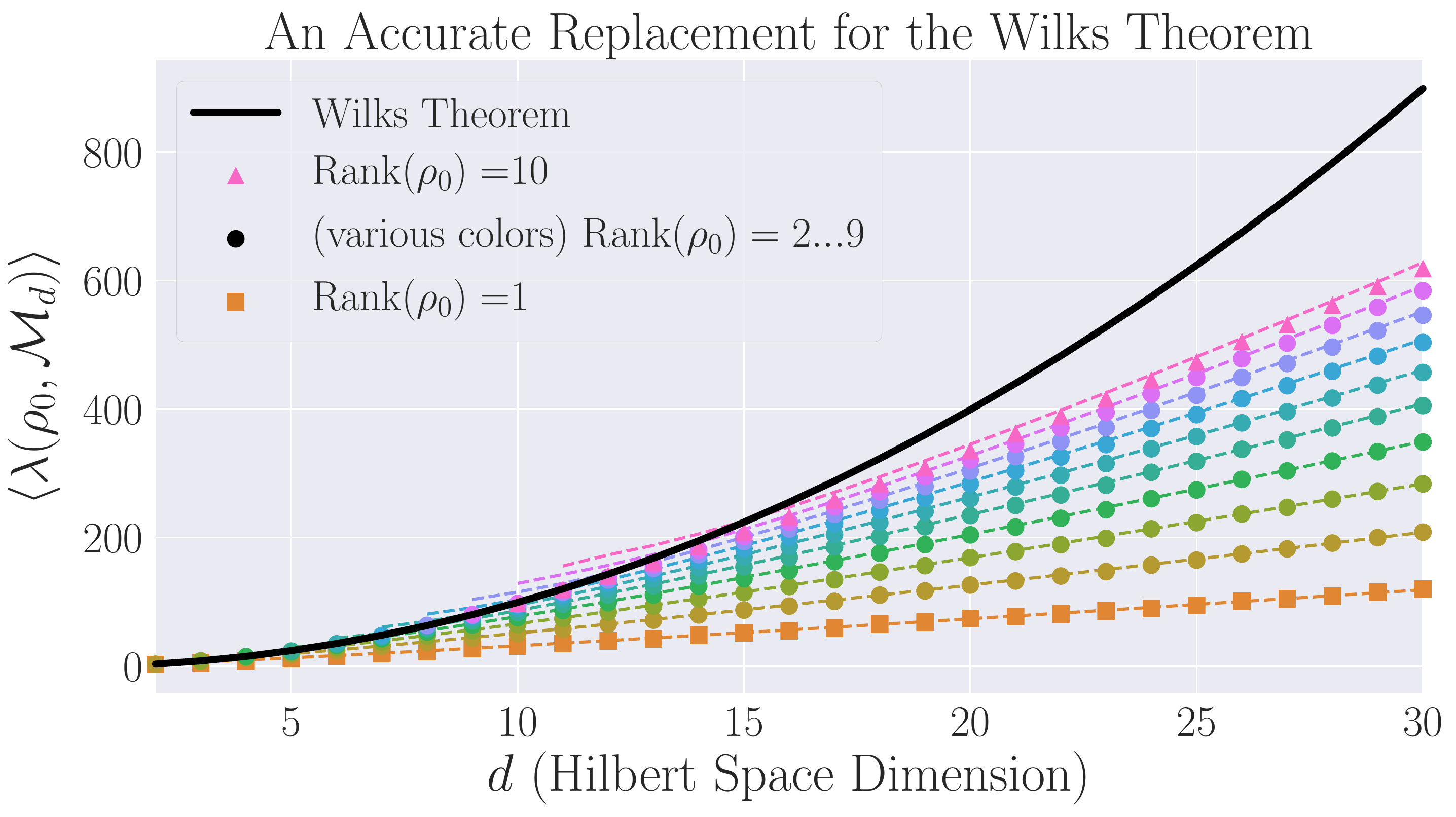}
 \caption{\textbf{Improved prediction for $\langle \lambda(\rho_{0}, \M_{d})\rangle$, as compared to the Wilks theorem:} Numerical results for $\expect{\lambda(\rho_{0}, \M_{d})}$ compared to the prediction of the Wilks theorem (solid line) and our replacement theory as given in Equation \eqref{eq:ourLLRS} (dashed lines).  Our formula depends on the rank $r$ of $\rho_0$ (unlike the Wilks prediction), and is nearly perfect for $r\ll d/2$.  It becomes less accurate as $r$ approaches $d/2$, and is invalid when $r\approx d$.}
 \label{fig:modelcomp-iso}
\end{figure}

In the asymptotic limit $d\rightarrow \infty$, while keeping $r$ fixed, $\langle \lambda \rangle$ takes the following form:
\begin{equation}
\langle \lambda \rangle \underset{d\rightarrow\infty}{\longrightarrow}~rd\left[6 - \frac{20}{7}\left(\frac{15\pi r}{2d}\right)^{2/5} +\frac{20}{21}\left(\frac{15\pi r}{2d}\right)^{4/5}\right]-5r^{2}.
\end{equation}
That $\langle \lambda \rangle$ scales as $\mathcal{O}(rd)$ in this regime is to be expected, as a rank-$r$ density matrix has $\mathcal{O}(rd)$ free parameters. Curiously though, this asymptotic result is not equal to $\langle \lambda_{\mathrm{L}}\rangle$, meaning that the ``kite" elements continue to contribute to the behavior of the statistic (even though most of them will be set to zero in the projection step when computing $\rhoML{\M}$).

\section{Comparison to Numerical Experiments}

To evaluate our null theory for $\langle \lambda \rangle$, we compare it to numerical experiments, described below.

\subsection{Comparison to Idealized Tomography (Isotropic Fisher Information)}
\label{sec:theorycomp1}

In our derivation of Equation \eqref{eq:ourLLRS}, we assumed both that the Fisher information is isotropic and that the number of samples is asymptotically infinite. For a variety of true states $\rho_{0}$ with dimension  $d=2,\ldots,30$ and rank $r=1,\ldots,10$ we: (a) generated $N=500$ i.i.d. $\mathcal{N}(\rho_{0}, \epsilon^{2}\mathcal{I})$ unconstrained ML estimates $\{\rhoML{\M'_{d, j}}\}_{j=1}^{N}$, thereby simulating the unconstrained ML estimates \emph{at} the $N_{\mathrm{samples}} = \infty$ limit, (b) numerically solved Equation \eqref{eq:MP-LANmle2} for each $\rhoML{\M'_{d, j}}$ to obtain the constrained ML estimate $\rhoML{\M_{d, j}}$, and (c) estimated $\langle \lambda \rangle$ as $\frac{1}{N}\sum_{j=1}^{N}\mathrm{Tr}[(\rho_{0} - \rhoML{\M_{d,j}})^{2}]/\epsilon^{2}$. We took $\epsilon = 10^{-4}$, to ensure that all of the unconstrained ML estimates are close to $\rho_{0}$, and that we are not erroneously generating estimates which are too far away. (Recall that our derivation used the fact that, asymptotically, we can ``zoom in" on $\rho_{0}$ to understand the behavior of $\lambda$. Consequently, if $\epsilon$ is too large, then some of the unconstrained ML estimates may almost be orthogonal to $\rho_{0}$, which clearly violates the conditions used in our derivation.)

In Figure \ref{fig:modelcomp-iso}, we compare our theory (dashed lines) to these numerical results (solid dots).  It is clear Equation \eqref{eq:ourLLRS} is almost perfectly accurate when $r \ll d/2$, but it does begin to break down as $r$ becomes comparable to $d$.

\subsection{Comparison to Heterodyne Tomography}
\label{sec:heterotomo}
In practice, the Fisher information is rarely isotropic.  So we tested our idealized result by applying it to a realistic, challenging, and experimentally relevant problem: quantum heterodyne (equivalent to double homodyne) state tomography \cite{Lvovsky2001a, Bertrand1987, Leonhardt1995, Lvovsky2009} of a single optical mode.  (See Figure \ref{fig:fish_condition} for a plot of the \emph{condition number} -- the ratio of the largest eigenvalue to the smallest -- of the estimated Fisher information. It is clear $\mathcal{I} \not \propto \Id$.) States of this continuous-variable system are described by density operators on the infinite-dimensional Hilbert space $L^2(\reals)$.  Fitting these infinitely many parameters to finitely much data demands simpler models.

\begin{figure}
  \includegraphics[width=.9\columnwidth]{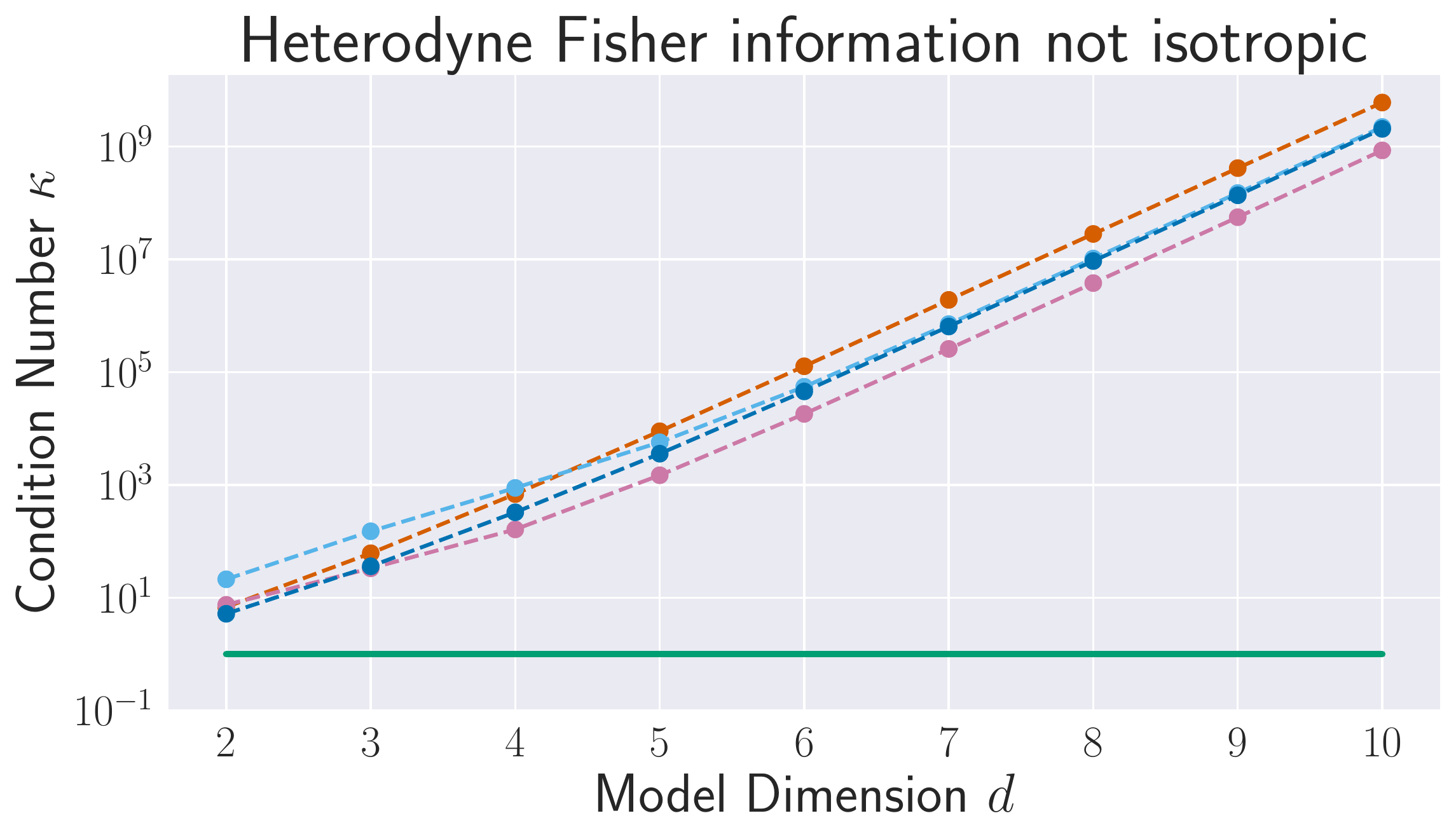}
 \caption{\textbf{Anisotropy of the heterodyne POVM Fisher information:} The condition number $\kappa$ -- the ratio of the largest eigenvalue to the smallest -- of the estimated heterodyne Fisher information. (Estimates are the average over 100 Hessians of the loglikelihood function.) $\kappa$ grows with model dimension, meaning anisotropy is increasing.  The dashed lines indicate different states $\rho_{0}$, and the solid line is $\kappa = 1$ (i.e., $\mathcal{I} \propto \Id$.).}
\label{fig:fish_condition}
\end{figure}

We consider a family of nested models motivated by a low-energy (few-photon) ansatz, and choose   
the Hilbert space $\mathcal{H}_d$ to be that spanned by the photon number states $\{\ket{0},\ldots ,\ket{d-1}\}$.
Heterodyne tomography reconstructs $\rho_{0}$ using data from repeated measurements of the 
coherent state POVM, $\{|\alpha\rangle\langle \alpha| /\pi, ~\alpha=x+ip\in \mathbb{C}\}$, which corresponds to sampling directly from the Husimi $Q$-function of $\rho_{0}$ \cite{Husimi1940}.

We examined the behavior of $\lambda$ for 13 distinct states $\rho_{0}$, both pure and mixed, supported on $\mathcal{H}_{2}, \mathcal{H}_{3}, \mathcal{H}
_{4}$, and $\mathcal{H}_{5}$.  We used rejection sampling to simulate 100 heterodyne datasets with up to $N_{\mathrm{samples}}=10^5$, and found ML estimates $\rhoML{\M_{d}}$ over each of the 9 models $\M_2, \ldots, M_{10}$ using numerical optimization \footnote{The model $\M_{1}$ is trivial, as $\M_{1} = \{|0\rangle \langle 0|\}$. This model will almost always be wrong, in general.}.  For each $\rho_{0}$ and each $d$, we averaged $\lambda(\rho_{0}, \M_{d})$ over all 100 datasets to obtain an empirical average loglikelihood ratio $\bar{\lambda}(\rho_0,d)$.

\begin{figure}
 \includegraphics[width=.9\columnwidth]{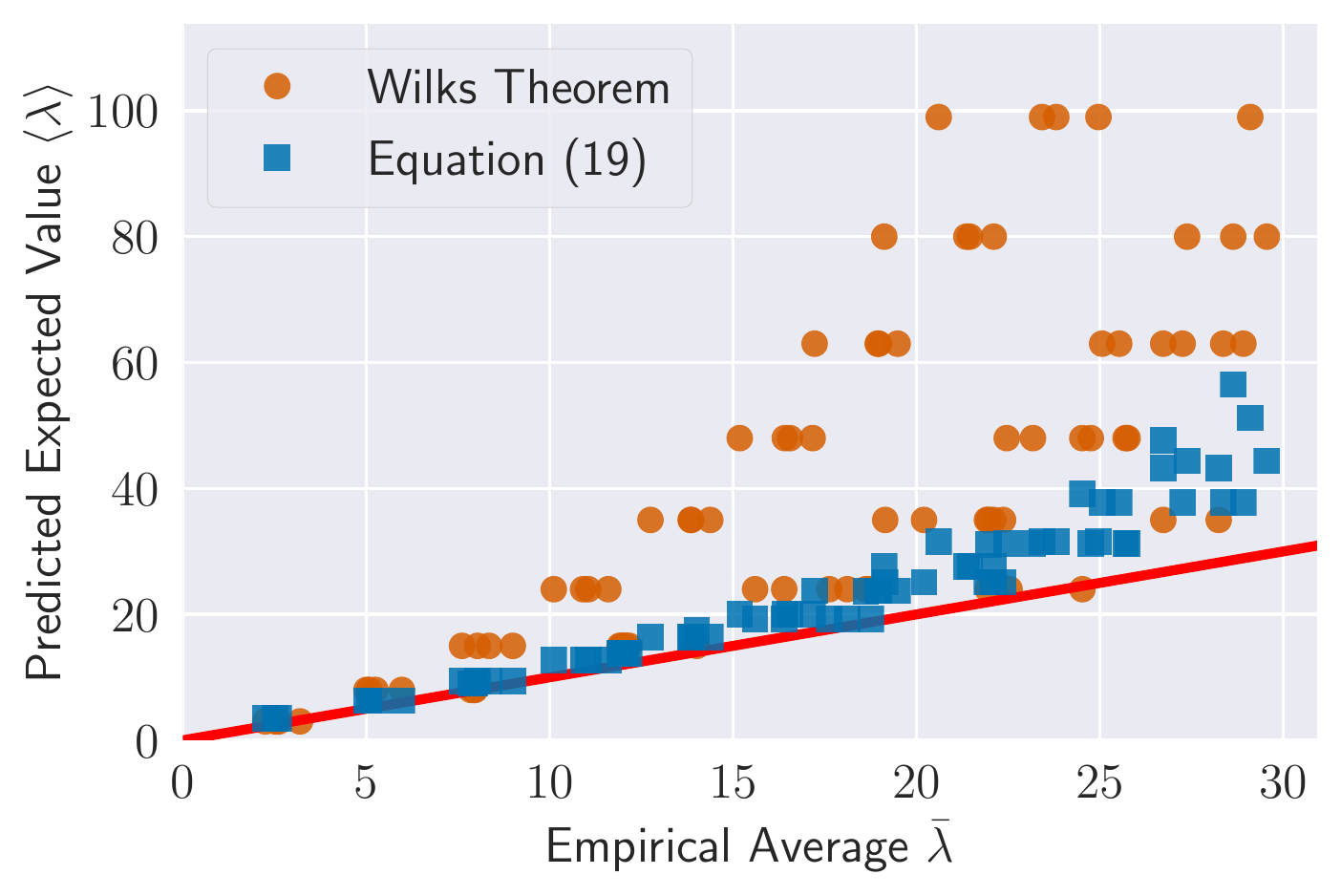}
 \caption{\textbf{Applying isotropic formula to heterodyne tomography:} The Wilks theorem (orange dots) dramatically over-estimates $\langle\lambda(\rho_{0}, \M_{d})\rangle$ in optical heterodyne tomography. Our formula, Equation \ref{eq:ourLLRS} (blue squares), is far more accurate. Residual discrepancies occur in large part because $N_{\mathrm{samples}}$ is not yet ``asymptotically large''. The solid red line corresponds to perfect correlation between theory ($\expect{\lambda}$) and practice ($\bar\lambda$).}
 \label{fig:modelcomp}
\end{figure}

Results of this test are shown in Figure \ref{fig:modelcomp}, where we plot the predictions for $\langle \lambda \rangle$ given by the Wilks theorem and Equation \eqref{eq:ourLLRS}, against the empirical average $\bar\lambda$, for a variety of $\rho_{0}$ and $d$. Our formula correlates very well with the empirical average, while the Wilks theorem (unsurprisingly) overestimates $\lambda$ dramatically for low-rank states.  Whereas a model selection procedure based on the Wilks theorem would tend to falsely reject larger Hilbert spaces (by setting the threshold for acceptance too high), our formula provides a reliable null theory.

Interestingly, as $d$ grows, Equation \eqref{eq:ourLLRS} also begins to overpredict. As Figure \ref{fig:totalcontrib} indicates, a more accurate description is that the numerical experiments are \emph{underachieving}, because $\bar\lambda$ is still growing with $N_{\mathrm{samples}}$.  Both the Wilks theorem and our analysis are derived in the limit $N_{\mathrm{samples}} \rightarrow \infty$; for finite but large $N_{\mathrm{samples}}$, both may be invalid.  Figure \ref{fig:totalcontrib} shows that, even at $N_{\mathrm{samples}}\sim 10^{5}$, the behavior of $\bar{\lambda}$ has failed to become asymptotic. This is surprising, and suggests heterodyne tomography is a particularly exceptional and challenging case to model statistically. 

\begin{figure}
  \includegraphics[width=.9\columnwidth]{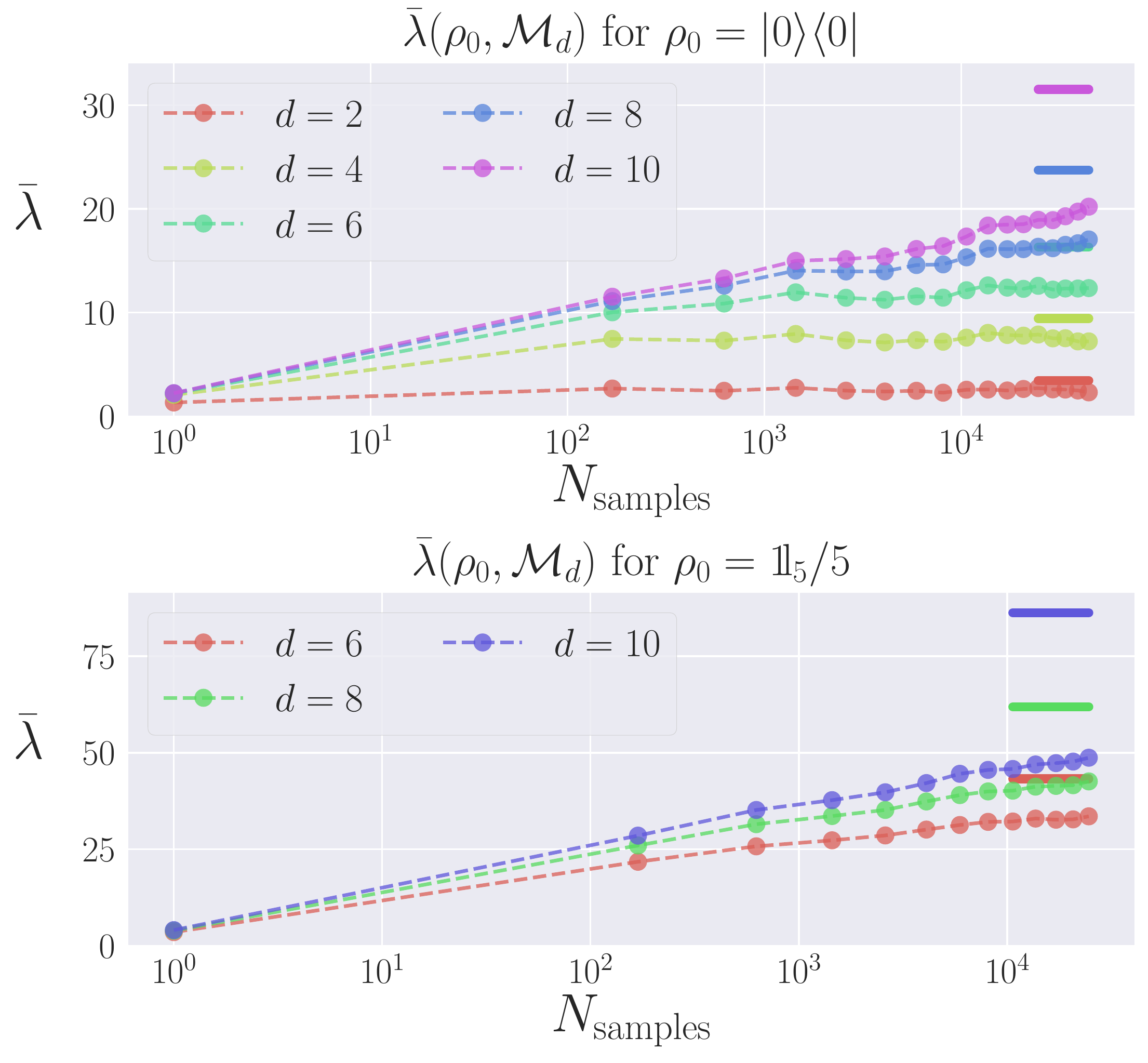}
 \caption{\textbf{``Underachievement" of $\bar{\lambda}$ in heterodyne tomography:} The empirical average $\bar{\lambda}$  may have achieved its asymptotic value, or is still 
growing, depending on $\rho_{0}$ and $d$. Solid lines indicate the value of our formula
for the asymptotic expected value, given in Equation \eqref{eq:ourLLRS}.}
\label{fig:totalcontrib}
\end{figure}

\begin{figure}
  \includegraphics[width=\columnwidth]{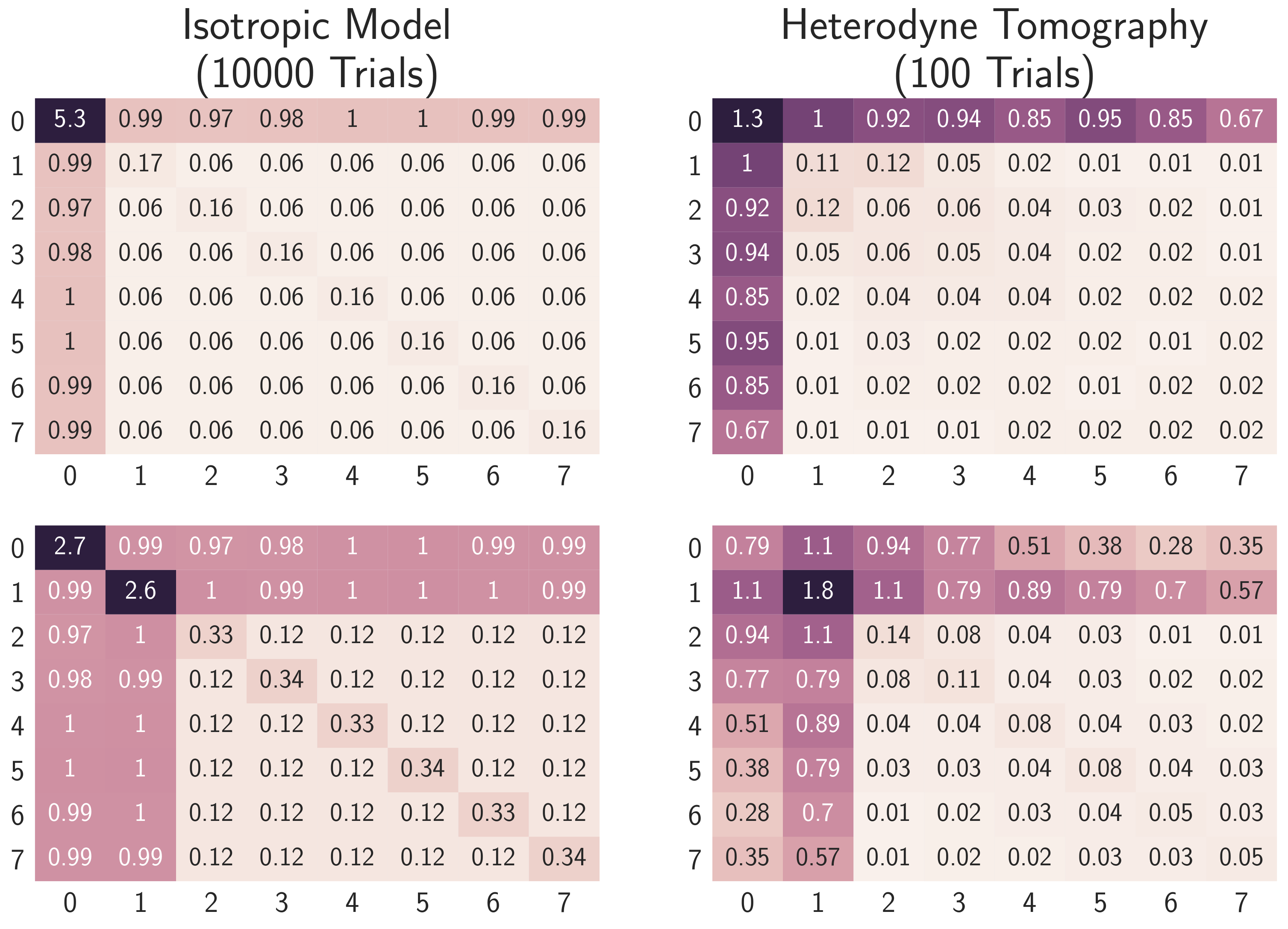}
 \caption{\textbf{Detailed comparison of isotropic model and heterodyne tomography:} The values of $\langle \lambda_{jk} \rangle$ for an isotropic Fisher information (left), and for heterodyne tomography (right). \textbf{Top}: $\rho_{0} = |0\rangle\langle 0|$. \textbf{Bottom}: $\rho_{0} = \mathcal{I}_{2}/2$. \textbf{Discussion}: Qualitatively, the behavior is the same, though there are quantitative differences, particularly within the kite.}
\label{fig:model_comparison}
\end{figure}

However, our model \emph{does} get some of the qualitative features correct. In Figure \ref{fig:model_comparison}, we present simulated values of $\langle \lambda_{jk}\rangle$ for an isotropic Fisher information and for heterodyne tomography. While the values of $\langle \lambda_{jk} \rangle$ do not agree exactly, they still decompose into two types, the ``L" and the ``kite". (See Figure \ref{fig:individcontrib} for an analysis of the discrepancies.)

\begin{figure}
  \includegraphics[width=\columnwidth]{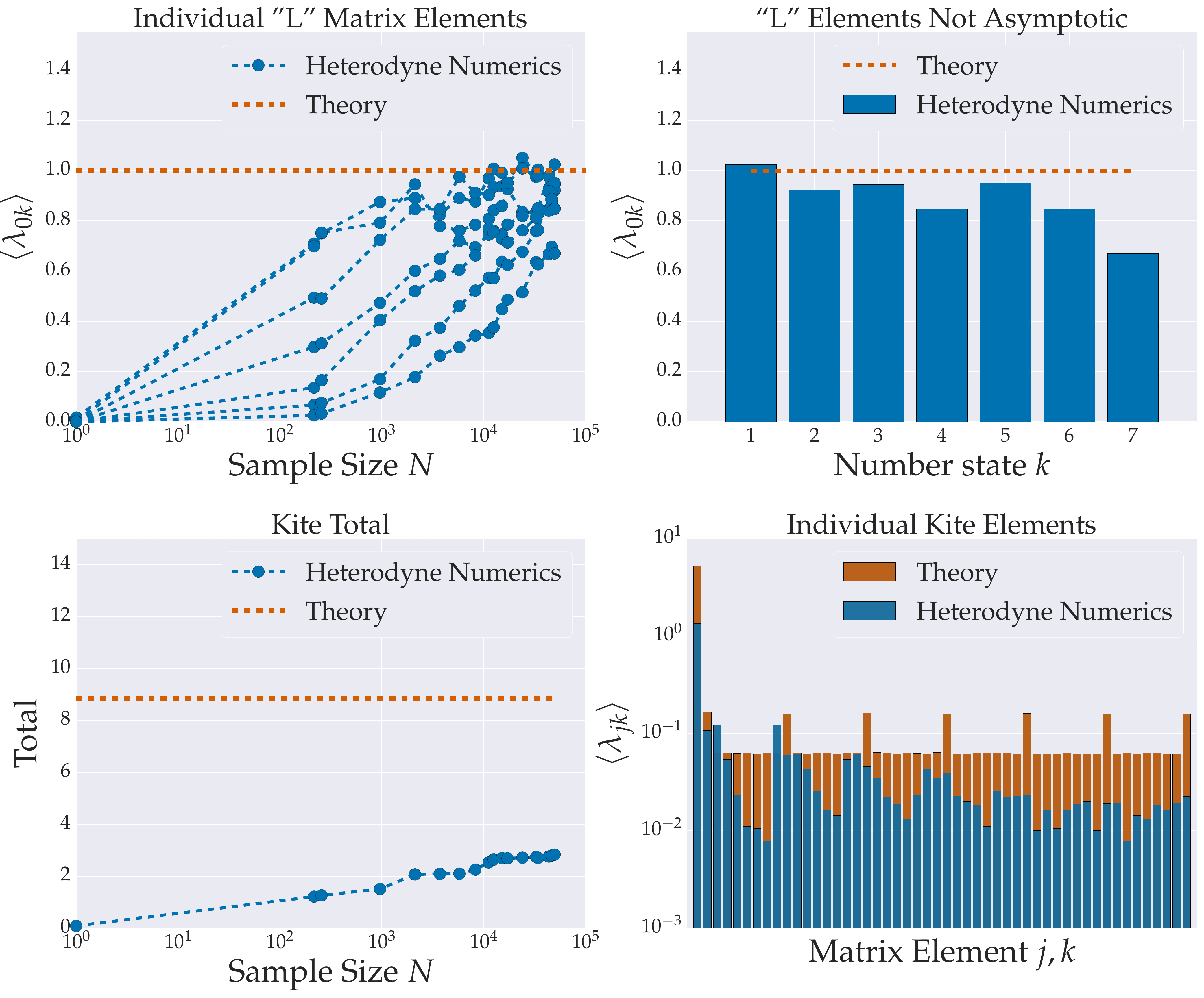}
 \caption{\textbf{Discrepancies between isotropic model and heterodyne tomography:} Examining how our prediction for $\langle \lambda_{jk} \rangle$ disagrees with simulated heterodyne experiments. We take $\rho_{0} = |0\rangle\langle 0|$ and $d=8$. \textbf{Top Left}: The values of  $\langle \lambda_{0k}\rangle$ in the ``L" as a function of $N_{\mathrm{samples}}$.  \textbf{Top Right}:  At the largest $N_{\mathrm{samples}}$ studied, $\langle \lambda_{0k}\rangle$ is less than 1, especially for the higher number states. \textbf{Bottom Left}: The total from the ``kite" versus $N_{\mathrm{samples}}$. It is clear the total is still growing. \textbf{Bottom Right}: The individual ``kite" elements $\langle \lambda_{jk}\rangle$ at the largest $N_{\mathrm{samples}}$ studied;  most are small compared to their values in the isotropic case.}
\label{fig:individcontrib}
\end{figure}

~\\
\section{Conclusions and Discussion}
Quantum state space violates local asymptotic normality, a key property satisfied by classical statistical models. Through the introduction of metric-projected local asymptotic normality, we have provided a new framework for reasoning about classical statistical results for models that don't satisfy LAN because of convex constraints.

We explicitly investigated one such result, the Wilks theorem, found it is not generally reliable in quantum state tomography, and provided  a much more broadly applicable replacement that can be used in model selection methods (Equation \eqref{eq:ourLLRS}).  This includes information criteria such as the AIC and BIC \cite{Akaike1974, Schwarz1978, Kass1995, Burnham2004} that do not explicitly use the Wilks theorem, but rely on the same assumptions (local asymptotic normality, equivalence between loglikelihood and squared error, etc.).  Null theories of loglikelihood ratios have many other applications, including hypothesis testing \cite{Blume-Kohout2010,Moroder2013} and confidence regions \cite{Glancy2012a}, and our result is directly applicable to them.  Refs. \cite{Moroder2013,Glancy2012a} both point out explicitly that their methods are unreliable near boundaries and therefore cannot be applied to rank-deficient states; our result fixes this outstanding problem.

However, our numerical experiments with heterodyne tomography show unexpected behavior, indicating that quantum tomography can still surprise, and may violate \emph{all} asymptotic statistics results.  In such cases, bootstrapping \cite{Efron1979, Higgins2004} may be the only reliable way to construct null theories for $\lambda$. 

Finally, the \emph{methods} presented here have application beyond the analysis of loglikelihoods.  Metric-projected local asymptotic normality provides a new and rigorous framework for describing the behavior of the maximum likelihood estimator in the presence of convex constraints. This framework helps shed light on the behavior of $\rhoML{d}$ for rank-deficient states, and can be used to predict other derived properties such as the average rank of the estimate, which is independently interesting for (e.g.) quantum compressed sensing \cite{Flammia2012a, Steffens2016, Kalev2015, Kalev2015a}.

\section{Acknowledgements:}
TLS thanks Jonathan A Gross for helpful discussions on software design, coding, and statistics, John King Gamble for useful insights on parallelized
computation and feedback on earlier drafts of this paper, and Anupam Mitra, Kenneth M Rudinger, and Daniel Suess for proofreading edits. The authors are grateful for those who provide support for the following software packages: iPython/Jupyter \cite{Perez}, matplotlib
\cite{Hunter2007}, mpi4py \cite{Dalcin2011},  NumPy \cite{VanDerWalt2011}, pandas \cite{mckinney2010}, Python 2.7 
\cite{vanRossum}, SciPy \cite{Oliphant2007a}, seaborn \cite{Waskom2016}, and SymPy \cite{Meurer2017}. 

Sandia National Laboratories is a multimission laboratory managed and operated by National Technology \& Engineering Solutions of Sandia, LLC, a wholly owned subsidiary of Honeywell International Inc., for the U.S. Department of Energy's National Nuclear Security Administration under contract DE-NA0003525.

\bibliographystyle{apsrev4-1}
\bibliography{wilkspaper}
\end{document}